\numberwithin{equation}{section}
\theoremstyle{plain}
\newtheorem{thm}{Theorem}[section]
\newtheorem{prop}[thm]{Proposition}
\newtheorem{lem}[thm]{Lemma}
\newtheorem{cor}[thm]{Corollary}
\theoremstyle{definition}
\newtheorem{defn}[thm]{Definition}
\theoremstyle{remark}
\newtheorem{rk}[thm]{Remark}
\DeclareFontFamily{U}{MnSymbolC}{}
\DeclareSymbolFont{MnSyC}{U}{MnSymbolC}{m}{n}
\DeclareFontShape{U}{MnSymbolC}{m}{n}{
	<-6>  MnSymbolC5
	<6-7>  MnSymbolC6
	<7-8>  MnSymbolC7
	<8-9>  MnSymbolC8
	<9-10> MnSymbolC9
	<10-12> MnSymbolC10
	<12->   MnSymbolC12}{}
\DeclareMathSymbol{\intprod}{\mathbin}{MnSyC}{'270}
\begin{document}

\title[Observations and predictions from past lightcones]{Observations and predictions from past lightcones}
\author{Martin Lesourd} 
\address{Black Hole Initiative, Harvard University}
\email{mlesourd@fas.harvard.edu}

\begin{abstract}
In a general Lorentzian manifold $M$, the past lightcone of a point is a proper subset of $M$ that does not carry enough information to determine the rest of $M$. That said, if $M$ is a globally hyperbolic Cauchy development of vacuum initial data on a Cauchy surface $S$ and there is a point whose past lightcone contains $S$, then the contents of such a lightcone determines all of $M$ (up to isometry). We show some results that describe what properties of $M$ guarantee that past lightcones do indeed determine all or at least significant portions of $M$. Null lines and observer horizons, which are well known features of the de-Sitter spacetime, play a prominent role.
\end{abstract}
\maketitle

\section{Introduction}
\hspace{0.2in} In Lorentzian geometry, an observer at a given time in a spacetime $(M,g)$ is represented by a timelike curve with future endpoint $p\in M$, and the past lightcone $J^-(p)\subset M$ of $p$ represents all signals in $M$ that can reach the observer at $p$. One can then ask the following. 
\begin{enumerate}
\item[\textbf{(A)}] Can an observer at $p\in M$ know the global structure of $M$ on the basis of $J^-(p)$?
\item[\textbf{(B)}] Can an observer at $p\in M$ make predictions about $M\backslash J^-(p)$ on the basis of $J^-(p)$?
\end{enumerate}
\hspace{0.2in} In this short note, we describe some of what is known about \textbf{(A)} and \textbf{(B)}, we prove various further results, and we list some natural further questions. \\ \indent 
Part of the appeal in \textbf{(A)} and \textbf{(B)} is that they are subject to somewhat surprising examples. As described below, two (inextendible and globally hyperbolic) spacetimes $(M',g')$ and $(M,g)$ can be non-isometric, in spite of the fact that each member of the countable collection of past lightcones $\{I^-(p_i)\}$ that covers $M$ can be isometrically embedded into $(M',g')$ and likewise with $M$ and $M'$ interchanged. Here we will show when this cannot happen. \\ \indent 
We now recall some basic definitions of causal theory, cf.\@ \cite{B}, \cite{MS}, \cite{ON} for some classic references and \cite{Minguzzisurvery} for a more recent authorative survey.\\ \indent 
A \textit{spacetime} $(M,g)$ is a connected \(C^{\infty}\) Hausdorff manifold $M$ of dimension two or greater with a Lorentzian metric \(g\) of signature \((-,+,+,...)\), and we will assume a time and space orientation. Since regularity is not the issue here, for simplicity of expression we take $g$ to be smooth, but many of the arguments can be extended to lower regularity.     \\ \indent
The lightcone structure inherited on the tangent space $T_pM$ at each $p$ leads to the notion of a causal curve, which in turn leads to defining $J^-(p)$ (or $I^-(p)$) as the collection of all points $q\in M$ from which there exists a causal (or timelike) curve with future endpoint $p$ and past endpoint $q$. A causal curve between $p$ and $q\in J^+(p)$ is \textit{achronal} iff $q\notin I^+(p)$. A \textit{null line} is an inextendible achronal causal curve.  \\ \indent
A set is \textit{achronal} (\textit{acausal}) iff no two members of it can be connected by a timelike (causal) curve. The \textit{domain of dependence} $D(S)$ of a set $S\subset M$ is given by $D(S)=D^+(S)\cup D^-(S)$ and $D^+(S)$ is defined as the collection of all points $q$ in $M$ such that any inextendible ($C^1$) past directed causal curve passing through $q$ intersects $S$. $\tilde{D}(S)$ is defined identically except that curves are timelike, rather than causal. From the perspective of the Cauchy problem of general relativity, $D^+(S)$ represents the maximal portion of $M$ that could be determined by initial data on $S$. If $S$ is closed as a subset of $M$ then $\overline{D^+(S)}=\tilde{D}^+(S)$. The \textit{future Cauchy horizon} is defined as $H^+(S)\equiv \overline{D^+(S)}\backslash I^-(D^+(S))$ and $H(S)=H^+(S)\cup H^-(S)$. \\ \indent 
$S$ is a \textit{partial Cauchy hypersurface} if it an edgeless acausal set, cf.\@ Definition 14.27 of \cite{B} for the definition of $\text{edge}(S)$ for an achronal set $S$. A spacetime $(M,g)$ is \textit{globally hyperbolic} iff there exists a partial Cauchy hypersurface $S$ such that $M=D(S)$. By \cite{BS03}, a smooth globally hyperbolic spacetime $(M,g)$ is isometric to $(\mathbb{R}\times S,-f(t)dt^2+h(t))$ where $f(t)$ is smooth and $h(t)$ a Riemannian metric on $S$. From the perspective of causal structure, global hyperbolicity is equivalent to the spacetime being \textit{causal}\footnote{No closed causal curves.} and $J^+(p)\cap J^-(q)$ being compact for all $p,q\in M$. \\ \indent 
If $S$ is acausal and $S\cap \text{edge}(S)=\emptyset$ (eg., $S$ is a partial Cauchy hypersurface), then $D(S)$ is non-empty, open, and $D(S) \cap H(S)=\emptyset$. Note that the openness of $D(S)$ may be ruined if we take $S$ to be achronal rather than acausal.\\ \indent 
A spacetime $(M,g)$ is \textit{causally simple} iff it is causal and $J^{+(-)}(p)$ is closed for all $p\in M$. Global hyperbolicity is strictly stronger than causal simplicity. \\ \indent 
An isometric embedding of $(M,g)$ into $(M',g')$ is an injective (but not necessarily surjective) map $\phi:M\hookrightarrow M'$ such that $\phi$ is a diffeomorphism onto its image $\phi(M)$ and $\phi^*(g')=g$. If $\phi$ maps $M$ surjectively onto $M'$ then we write $\phi:M\to M'$ and we say that $(M,g)$ and $(M',g')$ are \textit{isometric}. A spacetime $(M,g)$ is \textit{inextendible} when there exists no isometric embedding $\phi$ of $M$ into $M'$ such that $M'\backslash \phi(M)\neq \emptyset$. \\ \indent 
A spacetime is \textit{future holed} if there is a partial Cauchy hypersurface $S'\subset M'$ and an isometric embedding $ \phi:\tilde{D}(S')\hookrightarrow M$ such that $\phi(S')$ is acausal in $(M,g)$ where $(M,g)$ is any spacetime, and $\phi(H^+(S'))\cap D^+(\phi(S')) \neq \emptyset$. A spacetime is \textit{hole-free} if it lacks future and past holes. Minguzzi \cite{Ming1} shows that causally simple, inextendible spacetimes are hole free. \\  

\textbf{Acknowledgements}. We thank the Gordon Betty Moore Foundation and John Templeton Foundation for their support of Harvard's Black Hole Initiative. We thank Professors Erik Curiel, JB Manchak,  Ettore Minguzzi, Chris Timpson, and James Weatherall for valuable comments that improved the paper. 

\vspace{0.2in}
\section{Previous Work}
\textbf{(A)}. A natural definition coming from \cite{Mal} (whose terminology is slightly different) is the following.
\begin{defn}
A spacetime $(M,g)$ is \textit{weakly observationally indistinguishable} from $(M',g')$ just in case there is an isometric embedding $\phi: I^-(p) \hookrightarrow M'$ for every point $p\in M$. If this is also true with $M$ and $M'$ interchanged, then the spacetimes are \textit{strongly observationally indistinguishable}.
\end{defn}
One could replace $I^-(p)$ with $J^-(p)$, and although this makes no real difference to any of the results in \cite{Mal} and \cite{Manchak1} or indeed to what follows, that would capture a somewhat more honest sense of distinguishability because observers can certainly receive signals from $J^-(p)\backslash I^-(p)$.   \\ \indent
One can also take observers to be inextendible timelike curves $\sigma$ (as is also done in \cite{Mal}), but in that case $J^-(\sigma)\subset M$ would represent ``all possible observations that could be made supposing that the observer lasts forever with respect to $M$'', which, though interesting, is stronger than what happens in practice. \\  \indent 
Malament pg.\@ 65-6 \cite{Mal} gives examples of spacetimes that are strongly observationally indistinguishable but non-isometric. His examples are globally hyperbolic, inextendible, and exploit the presence of observer horizons in the de-Sitter spacetime. \\ \indent
Based on an explicit cut and paste argument, Manchak \cite{Manchak1} shows the following.
\begin{prop}[Manchak \cite{Manchak1}]\label{nemesis}
Given any non-causally bizarre\footnote{A spacetime is causally bizarre if there is a point $p\in M$ such that $M\subset I^-(p)$.} spacetime $(M,g)$, there exists a spacetime $(M',g')$ that is weakly observationally indistinguishable from $(M,g)$ but not isometric to $(M,g)$.
\end{prop}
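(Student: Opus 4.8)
The plan is to carry out the explicit cut-and-paste of Manchak \cite{Manchak1}. Since $(M,g)$ is not causally bizarre, $I^-(p)\subsetneq M$ for every $p\in M$; fix one such $p$ and a point $q\in M\setminus I^-(p)$, so that $p\notin I^+(q)$. Note that $A:=M\setminus I^+(q)$ is a proper closed subset of $M$ that is moreover a past set: if $x\in A$ and $y\in I^-(x)$, then $y\in I^+(q)$ would force $x\in I^+(y)\subseteq I^+(q)$, a contradiction, so $y\in A$. Thus $p\in A$ while $M$ genuinely continues to the future of $q$, and this is precisely the room that non-bizarreness buys. The idea is to alter $M$ only inside $I^+(q)$ so as to obtain a non-isometric spacetime $(M',g')$ whose past cones nonetheless all embed isometrically into $M$.

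Concretely, the cleanest version simply sets $M':=M\setminus K$ for a suitable non-empty closed set $K\subseteq\overline{I^+(q)}$ --- for instance a small, generically chosen closed coordinate ball about a point of $I^+(q)$. Then $M'$ is connected, Hausdorff, $C^{\infty}$ and Lorentzian, and inherits the time- and space-orientation of $M$, so it is a spacetime. (Manchak's original argument instead glues two copies of $M$ along a common region; the price there is that the gluing must be done along a hypersurface with a matching collar to stay smooth and Hausdorff.) Weak observational indistinguishability is then almost immediate: for any $p'\in M'$ the set $I^-_{M'}(p')$ is an open subset of $M$, so the inclusion $I^-_{M'}(p')\hookrightarrow M$ is an isometric embedding of a spacetime, which is exactly what is required. (In the two-copy version one argues instead that, since $A$ is a past set, no past-directed causal curve from $p'$ can cross the seam into genuinely new geometry, so $I^-_{M'}(p')$ is again isometric to an open subset of $M$.)

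It remains to check $M'\not\cong M$, which is the real content. If $M$ is inextendible this is immediate: $M'=M\setminus K$ is then extendible --- one may fill $K$ back in --- whereas inextendibility is an isometry invariant. In general the surgery leaves a global fingerprint that $M$ lacks, namely an incompleteness/boundary phenomenon along $\partial K$ together with a change in the topology of the affected region, and the task is to rule out that some element of $\Aut(M,g)$ absorbs it; one does this by choosing $K$ (using the freedom in $q$ supplied by non-bizarreness) so that its boundary data matches no pre-existing ``boundary at infinity'' of $M$ and sits where the isometry group cannot move it. The main obstacle is exactly this last point --- arranging the surgery so that $M'$ is at once a bona fide spacetime and provably not isometric to $M$ no matter how symmetric $M$ is; the indistinguishability half is essentially free once $M'$ is realized as (locally isometric to, or literally a subset of) $M$.
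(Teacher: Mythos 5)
First, a caveat on the comparison: the paper does not actually prove Proposition~\ref{nemesis} --- it is quoted from Manchak, and the only thing the text records about the argument is that it is an explicit cut-and-paste which ``relies on introducing a countably infinite collection of holes in $(M',g')$.'' Your single-deletion construction $M'=M\setminus K$ is therefore a genuinely different (and much cheaper) route, and the place where it breaks is exactly the place the countably-infinite gluing is designed to protect: the non-isometry step. Note that under the literal reading of the statement (every $I^-(p')$, $p'\in M'$, embeds into $M$), the inclusion $M\setminus K\hookrightarrow M$ makes the indistinguishability half completely trivial, so non-isometry is the \emph{entire} content of the proposition --- and your argument for it covers only inextendible $M$. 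For general $M$ the ``generically chosen ball'' idea fails as stated: let $M$ be Minkowski space with a congruent closed ball deleted around $(0,n,0,\dots,0)$ for every integer $n\geq 0$; deleting one further such ball around $(0,-1,0,\dots,0)$ produces a spacetime isometric to $M$ via a unit translation. An appeal to $\Aut(M,g)$ does not help here, because an isometry $M\setminus K\to M$ need not arise from a symmetry of $M$. You acknowledge this obstacle, but the proposition quantifies over \emph{all} non-causally-bizarre spacetimes --- including extendible ones and ones already saturated with excisions of every shape --- and you give no uniform recipe for a $K$ that survives; as written, the proof establishes the result only for inextendible $M$.

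A second, related gap concerns the direction of indistinguishability. The force of Manchak's theorem (the reason $(M',g')$ is a ``nemesis'') is that observers in $M$ cannot rule out inhabiting $M'$, i.e.\ every $I^-(p)$ with $p\in M$ must embed isometrically into $M'$. With $M'=M\setminus K$ this is false for any $p$ whose chronological past meets $K$, unless one separately shows that $I^-(p)$ re-embeds elsewhere in $M\setminus K$ --- which fails in general. The direction you verify (past cones of $M'$ embedding into $M$) is the trivial one. Manchak's chain of countably many deleted-and-glued copies of $M$ is arranged precisely so that each past cone, on either side, lands inside an undamaged copy, while the infinite family of holes furnishes a non-isometry argument that works for every non-causally-bizarre $(M,g)$ at once. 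If you want to salvage the one-excision approach, you should either restrict the proposition to inextendible spacetimes (where your argument is complete for the literal reading) or supply the missing uniform non-isometry mechanism.
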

Although Manchak's construction of $(M',g')$ works for any (non-causally bizarre) spacetime $(M,g)$, it relies on introducing a countably infinite collection of holes in $(M',g')$. It is unknown to us whether Proposition \ref{nemesis} holds for strong observational indistinguishability.  \\\ \\ \indent 
Viewed together, \cite{Mal} and \cite{Manchak1} lead one to the following. \\ \\
\textbf{Question.} Find conditions $\{ A,B,...\}$ satisfied by $(M,g)$ and $(M',g')$ such that: \\ \\
\hspace*{0.7in} `Weakly (or strongly) observationally indistinguishable $+ A + B +...$' \\ 
\hspace*{2.8in} $\Leftrightarrow$ \\ 
\hspace*{1.8in} `$(M,g)$ and $(M',g')$ are isometric' \\ \\ 
Proposition \ref{Nulllines} and Corollary \ref{deSitter} below are in this direction. \\ \\ 
\indent \textbf{(B)}. Geroch \cite{Ger} defines prediction in general relativity as follows.
\begin{defn}
$p\in M$ is \textit{predictable} from $q$, written as $p\in P(q)$, iff $p\in D^+(S)$ for some closed, achronal set $S\subset J^-(q)$. $p\in M$ is \textit{verifiable} iff $p\in P(q)$ and $p\in I^+(q)\backslash J^-(q)$.
\end{defn} 
Manchak observes the following.  
\begin{prop}[Manchak \cite{Manchak2}]
If $P(q)\cap (I^+(q)\backslash J^-(q))\neq \emptyset$, then $(M,g)$ admits an edgless compact achronal set.
\end{prop}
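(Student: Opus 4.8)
The plan is to show that, under the hypothesis, the achronal boundary $\Sigma_{0}:=\partial I^{-}(q)$ is \emph{itself} a compact edgeless achronal set, so it serves as the required example. Fix $p\in P(q)\cap\bigl(I^{+}(q)\setminus J^{-}(q)\bigr)$ witnessed by a closed achronal $S\subseteq J^{-}(q)$ with $p\in D^{+}(S)$; note $q\ll p$. (If $p\notin\operatorname{int}D^{+}(S)$, i.e.\ $p\in H^{+}(S)$, a small limiting modification is needed; I assume $p\in\operatorname{int}D^{+}(S)$.) First I would arrange $p\notin\overline{I^{-}(q)}$: since $p\in I^{+}(q)\setminus J^{-}(q)$ either this holds already or $p\in\partial I^{-}(q)$, and in the latter case I replace $p$ by any $r\in I^{+}(p)\cap D^{+}(S)$ (nonempty as $p\in\operatorname{int}D^{+}(S)$); one has $I^{+}(x)\cap\overline{I^{-}(q)}=\emptyset$ for every $x\in\partial I^{-}(q)$ (if $w\in I^{+}(x)\cap\overline{I^{-}(q)}$ then the open set $I^{+}(x)$ meets $I^{-}(q)$, giving $x\ll q$, contradicting $x\notin I^{-}(q)$), so $r\notin\overline{I^{-}(q)}$, and $r\in I^{+}(q)\setminus J^{-}(q)$ as before. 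So WLOG $p\notin\overline{I^{-}(q)}$.

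Step 1: $p\in D^{+}(\Sigma_{0})$ and $\Sigma_{0}\subseteq\overline{J^{-}(p)}$. Any past‑inextendible causal curve from $p$ meets $S$ (as $p\in D^{+}(S)$), hence enters $\overline{I^{-}(q)}\supseteq S$; starting at $p\notin\overline{I^{-}(q)}$ it therefore crosses $\partial\overline{I^{-}(q)}\subseteq\partial I^{-}(q)=\Sigma_{0}$. Thus $p\in D^{+}(\Sigma_{0})$, and running the same argument on a neighbourhood of $p$ inside $\operatorname{int}D^{+}(S)\setminus\overline{I^{-}(q)}$ gives $p\in\operatorname{int}D(\Sigma_{0})$. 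Moreover $\Sigma_{0}$ is always achronal (if $a\ll b$ with $a,b\in\partial I^{-}(q)$ then $a\in I^{-}(b)\subseteq I^{-}(q)$, impossible) and edgeless (for $x\in\partial I^{-}(q)$ one checks $I^{-}(x)\subseteq I^{-}(q)$ and $I^{+}(x)\cap\overline{I^{-}(q)}=\emptyset$, so every timelike curve from $I^{+}(x)$ to $I^{-}(x)$ crosses $\Sigma_{0}$), hence a closed, edgeless, achronal $C^{0}$ hypersurface. Finally $q\ll p$ gives $J^{-}(q)\subseteq I^{-}(p)$, so $\Sigma_{0}\subseteq\overline{I^{-}(q)}\subseteq\overline{J^{-}(p)}$.

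Step 2: compactness of $\Sigma_{0}$ — the crux. Since $\Sigma_{0}$ is closed, edgeless and achronal and $p\in\operatorname{int}D(\Sigma_{0})$, $p$ lies in the globally hyperbolic open set $N:=\operatorname{int}D(\Sigma_{0})$ for which $\Sigma_{0}$ plays the role of a Cauchy hypersurface, and a standard limit‑curve argument then shows $\overline{J^{-}(p)}\cap\Sigma_{0}$ is compact: an escaping sequence $y_{k}$ in it would give causal curves to $p$ (through the $y_{k}$, or through nearby points) limiting to a past‑inextendible causal curve $\gamma$ from $p$ which, by $p\in D^{+}(\Sigma_{0})$, crosses $\Sigma_{0}$ at an interior parameter; transversality of that crossing forces the nearby $\gamma_{k}$ to meet $\Sigma_{0}$ near it, and by achronality of $\Sigma_{0}$ this is the only point of $\gamma_{k}\cap\Sigma_{0}$, so $y_{k}$ stays near it — contradicting $y_{k}\to\infty$. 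By Step 1, $\overline{J^{-}(p)}\cap\Sigma_{0}=\Sigma_{0}$, so $\Sigma_{0}=\partial I^{-}(q)$ is compact, and we are done.

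The main obstacle is Step 2. What one must exclude is exactly the ``Minkowski phenomenon'' — a past‑directed causal curve from $p$ escaping to infinity without ever re‑entering $\overline{I^{-}(q)}$ — and it is precisely $p\in D^{+}(S)$ with $S\subseteq J^{-}(q)$ that forbids this, trapping every past‑inextendible causal curve from $p$ inside $\overline{I^{-}(q)}$ and thus keeping $\partial I^{-}(q)$ finite. A technical wrinkle is that $\Sigma_{0}$ is null‑generated, hence not acausal, so the usual ``$\operatorname{int}D(S)$ is globally hyperbolic'' statements (stated for acausal $S$) are not available verbatim; this is handled either by carrying out the limit‑curve compactness argument directly (treating non‑transverse contacts of $\gamma$ with $\Sigma_{0}$ by passing to an earlier, transverse crossing), or by first reducing $S$ to a partial Cauchy hypersurface and showing that surface is itself compact.
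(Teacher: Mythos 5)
The paper does not actually prove this statement: it is quoted from Manchak \cite{Manchak2} and stated without proof, so I can only judge your argument on its own terms. Your choice of witness, $\Sigma_{0}=\partial I^{-}(q)$, is the natural one, and your preliminary steps are sound: $\partial I^{-}(q)$ is automatically closed, achronal and edgeless (it is an achronal boundary), the reduction to $p\notin\overline{I^{-}(q)}$ works (modulo the $p\in H^{+}(S)$ case you flag and defer), and the argument that every past-inextendible causal curve from $p$ must cross $\Sigma_{0}$ because it meets $S\subseteq J^{-}(q)\subseteq\overline{I^{-}(q)}$ is correct, so $p\in D^{+}(\Sigma_{0})$.

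The gap is in Step 2, and it is not merely the acausality ``wrinkle'' you mention at the end. Your contradiction requires that the escaping points $y_{k}\in\Sigma_{0}$ be forced close to the crossing point of the limit curve, but the causal curves you construct do not see the $y_{k}$ as their intersection with $\Sigma_{0}$. If you route the curve through a nearby $z_{k}\in I^{-}(q)$ (which is what $\Sigma_{0}\subseteq\overline{J^{-}(p)}$ actually licenses, since $y_{k}\in J^{-}(p)$ is not guaranteed), then $z_{k}\notin\Sigma_{0}$ and the curve's unique crossing $w_{k}$ of $\Sigma_{0}$ occurs at its first entry into $\overline{I^{-}(q)}$; since $I^{-}(q)$ is a past set, the curve can then run arbitrarily far inside $I^{-}(q)$ before reaching $z_{k}$, so bounding $w_{k}$ says nothing about $y_{k}$. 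If instead you route through $y_{k}$ itself, the portion of the curve inside $\overline{I^{-}(q)}$ ending at $y_{k}$ must be achronal, hence a null geodesic lying in $\Sigma_{0}$, so ``$\gamma_{k}$ meets $\Sigma_{0}$ only at one point near the limit crossing'' is false, and there is no transversality to appeal to for a $C^{0}$, null-generated hypersurface. The configuration your argument must exclude --- a past-inextendible achronal null generator of $\partial I^{-}(q)$ with future endpoint $q$ escaping every compact set --- is not touched by anything you use: the limit of your curves is then the timelike segment from $p$ to $q$ followed by that generator, which duly meets $S$ somewhere along the generator, so $p\in D^{+}(S)$ yields no contradiction. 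Compactness of $\Sigma_{0}$, which you correctly identify as the entire content of the proposition, is therefore not established; closing it requires genuinely more input (e.g.\ the classical compactness of $J^{-}(p)\cap J^{+}(S)$ for $p\in\operatorname{int}D^{+}(S)$, or an argument that such half null lines are incompatible with $S\subseteq J^{-}(q)$ being closed and achronal), none of which appears in the proposal.
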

A slightly different notion of prediction considered in \cite{Manchak2} is as follows. 
\begin{defn}
$p\in M$ is \textit{genuinely predictable} from \(q\), written $\mathcal{P}(q)$, iff $p\in P(q)$ and for all inextendible spacetimes $(M',g')$, if there is an isometric embedding $\phi:J^-(q)\hookrightarrow M'$, then there is an isometric embedding $\phi':J^-(q)\cup J^-(p)\hookrightarrow M'$ such that $\phi =\phi'_{\mid J^-(q)}$.
\end{defn}
The idea here is that genuine predictions guarantee that observers with the same past make the same predictions. By a short cut-and-paste argument, Manchak observes the following. 
\begin{prop}[Manchak \cite{Manchak2}]\label{notgenuine}
Let $(M,g)$ be any spacetime and $q$ a point in $M$. Then $\mathcal{P}(q)\subseteq \partial J^-(q)$.
\end{prop}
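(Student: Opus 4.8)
The plan is to prove the contrapositive by an explicit cut-and-paste: from a point $p\notin\partial J^-(q)$ I will build an inextendible spacetime $(M',g')$ together with an isometric embedding $\phi\colon J^-(q)\hookrightarrow M'$ that admits no isometric extension to $J^-(q)\cup J^-(p)$, so that $p\notin\mathcal P(q)$. The case that carries content is $p\notin\overline{J^-(q)}$; since $\Int J^-(q)=I^-(q)$ consists of events already seen from $q$, what is really at stake is the inclusion $\mathcal P(q)\subseteq\overline{J^-(q)}$, and the refinement to $\partial J^-(q)$ only uses that interior points of $J^-(q)$ are themselves observed. So assume $p\notin\overline{J^-(q)}$ and fix an open $U\ni p$ with $\overline U$ compact and $\overline U\cap\overline{J^-(q)}=\emptyset$; choose $r\in I^-(p)\cap U$ and a small geodesic ball $B$ about $r$ with $\overline B\subseteq I^-(p)\cap U$.

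For the construction, pick a smooth symmetric $2$-tensor $h$ with support exactly $\overline B$, nowhere zero on $B$, and small enough that $g'_0:=g+h$ is again a Lorentzian, time-orientable metric whose null cones inside $B$ lie inside those of $g$. Since $h$ is supported away from $\overline{J^-(q)}$ and only narrows cones, it alters neither $J^-(q)$ nor its induced geometry, so $(M,g'_0)$ contains an isometric copy of $(J^-(q),g)$. Let $(M',g')$ be an inextendible spacetime containing $(M,g'_0)$ --- it is standard that such an extension exists --- and take $\phi\colon J^-(q)\hookrightarrow(M,g'_0)\hookrightarrow M'$ to be the inclusion.

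Now suppose for contradiction that an isometric embedding $\phi'\colon J^-(q)\cup J^-(p)\hookrightarrow M'$ extends $\phi$. An isometric embedding between Lorentzian manifolds of equal dimension is a local isometry, hence rigid: on a connected domain it is determined by its $1$-jet at a single point, so two such maps agreeing on a nonempty open set agree everywhere. Assuming $I^-(p)\cap I^-(q)\neq\emptyset$ (the generic situation; see below), this is a nonempty open subset of the connected set $J^-(p)\setminus\overline B$, on which $g=g'_0$ and on which $\phi'$ agrees with the inclusion into $(M,g'_0)$; by rigidity $\phi'$ equals that inclusion on all of $J^-(p)\setminus\overline B$, in particular on a thin shell just outside $\partial B$. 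Because $(M,g'_0)$ is open in $M'$ and $\phi'$ is a continuous open map agreeing with the inclusion on that shell, at each $x\in\partial B$ one has $\phi'(x)=x$ and $d\phi'_x=\id$, whence the normal-coordinate description of a local isometry forces $\phi'=\exp^{g'_0}_x\circ(\exp^{g}_x)^{-1}$ near $x$; but this map is not an isometry from $g$ to $g'_0$ on the portion of $B$ near $x$, because $h$ does not vanish there. This contradicts $\phi'$ being isometric, so no extension $\phi'$ exists and $p\notin\mathcal P(q)$.

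The step I expect to be the main obstacle is the remaining case $I^-(p)\cap I^-(q)=\emptyset$, where the rigidity argument has no anchor because $J^-(p)$ need not be joined to $J^-(q)$ inside the domain of $\phi'$; there one must instead arrange the cut-and-paste so that $M'$ contains no isometric copy at all of a neighbourhood of $r$ in $J^-(p)$ --- e.g.\ by choosing $h$ so that some curvature invariant at $r$ overshoots its supremum over $(M,g)$ and is not reintroduced by the chosen extension --- which needs some care. One should also verify the existence of the inextendible extension $(M',g')$ and, when $\dim M=2$, the connectedness of $J^-(p)\setminus\overline B$ (which may require siting $\overline B$ so that causal curves to $p$ can be routed around it).
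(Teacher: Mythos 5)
The paper offers no proof of this proposition --- it is quoted from \cite{Manchak2} with only the remark that it follows from a short cut-and-paste argument --- so your attempt can only be measured against the spirit of that construction, and its architecture (contrapositive, a compactly supported perturbation of the metric inside $I^-(p)$ away from $\overline{J^-(q)}$, rigidity of local isometries) is indeed the right kind of cut-and-paste. The genuine gap is the final step of the main case. That $h$ is ``nowhere zero on $B$'' does not make $(B,g)$ and $(B,g'_0)$ non-isometric relative to $\partial B$: nonvanishing of $h$ is not a diffeomorphism invariant, and for $h$ of the form $\psi^{*}g-g$ with $\psi$ a diffeomorphism equal to the identity outside a compact subset of $B$ (such $h$ is supported in $\overline B$ and typically nonzero on most of $B$), the map equal to the inclusion off $\overline B$ and to $\psi^{-1}$ on $B$ is a perfectly good isometric extension $\phi'$, so no contradiction can arise. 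Accordingly, the sentence ``this map is not an isometry from $g$ to $g'_0$ \dots because $h$ does not vanish there'' is an assertion, not an argument --- and near $x\in\partial B$, where $h$ vanishes to infinite order, it visibly gives nothing. The repair is exactly the device you hold in reserve for the other case: choose $h$ so that some curvature scalar of $g'_0$ attains a value somewhere on $\overline B$ that the corresponding scalar of $g$ attains nowhere on $\overline B$. Then rigidity pins $\phi'$ to the inclusion on $I^{-}(p)\setminus\overline B$, continuity and injectivity force $\phi'(\overline B)=\overline B$, and $\phi'|_{\overline B}$ would be an isometry of $(\overline B,g)$ onto $(\overline B,g'_0)$, which the invariant forbids.

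Two further remarks. The case $I^-(p)\cap I^-(q)=\emptyset$, which you single out as the main obstacle, cannot occur, and the reason is the hypothesis $p\in P(q)$ that your argument never uses: $p\in D^{+}(S)$ for some closed achronal $S\subseteq J^-(q)$, so a past-inextendible timelike curve from $p$ meets $S$ at some $s$, and $s\neq p$ since $p\notin\overline{J^-(q)}$; hence $\emptyset\neq I^-(s)\subseteq I^-(p)\cap I^-(q)$. By contrast, your dismissal of the case $p\in\Int J^-(q)=I^-(q)$ is not a proof: there $J^-(p)\subseteq J^-(q)$, the extension clause in the definition of $\mathcal P(q)$ is vacuously satisfied by $\phi'=\phi$, and $P(q)\cap I^-(q)$ is typically nonempty (already in Minkowski space), so under the paper's literal definitions such points do belong to $\mathcal P(q)$. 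The sharpening from $\overline{J^-(q)}$ to $\partial J^-(q)$ therefore requires either a convention excluding already-observed events from the domain of prediction or a finer reading of Geroch's $P(q)$; it cannot simply be waved through.
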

Thus the domain of genuine predictions from \(q\), if non-empty, is on the verge of being a retrodiction.

\section{Some Observations}
\hspace{0.15in} We assume that spacetimes satisfy the field equations
\begin{equation}\label{Einstein}
G_{g}\equiv \text{Ric}_{g}-\frac{1}{2}g\: \text{Scal}_{g} =T_{\phi,...}-\Lambda \: g
\end{equation}
where $\Lambda\in \mathbb{R}$ is a constant, and with $T_{\phi,...}$ the stress-energy tensor associated with possible matter fields $\{ \phi,...\}$ in $M$.\footnote{Here we think of the Cauchy problem from the perspective of `initial data', as opposed to `initial and boundary data', though the latter is more natural for $\Lambda<0$.} The system \eqref{Einstein} leads to the formulation of a Cauchy problem on a spacelike initial data set $S$. In the vacuum setting $T=0$, $\Lambda=0$, the Cauchy problem was shown \cite{CBG} to be well posed in the sense that there exists a unique, up to isometry, maximal globally hyperbolic development of $S$ obtained by Cauchy evolution of the initial data on $S$ according to \eqref{Einstein} with $T=0$, $\Lambda=0$. This well posedness has been extended to more general settings and we take it as a pre-condition for the spacetimes we consider.
\begin{defn}
Fix the constant $\Lambda$ and fix an expression for $T_{\phi,...}$. Given a spacetime $(M,g)$ and an acausal edgeless connected set $S\subset M$, we say that $(\tilde{D}(S),g|_{\tilde{D}(S)})$ is a \textit{faithful development} if it is uniquely determined, up to isometry, by Cauchy evolution of the initial data on $S$ according to \eqref{Einstein}. If $(M,g)$ admits a connected acausal edgeless set, we say that it is \textit{locally Cauchy} if, for any connected acausal edgeless set $S$, $(\tilde{D}(S),g|_{\tilde{D}(S)})$ is a faithful development.
\end{defn} 
\begin{rk}
Note that this is slightly unorthodox in the sense that it is usually $D(S)$, rather than $\tilde{D}(S)$, which we think of as being determined by $S$. Given that $S$ is closed for the definition of locally Cauchy, we have $\overline{D(S)}=\tilde{D}(S)$, and so being locally Cauchy is only slightly stronger than asking for $D(S)$ to be determined up to isometry.
\end{rk}
\begin{rk}
Since we want to guarantee isometric embeddings, we want to rule out examples of regions that are globally hyperbolic but not determined by Cauchy evolution. To see a trivial example\footnote{Examples like this suggest the following problem. Given a Lorentzian manifold $(M,g)$ with an arbitrary geodesically complete Lorentzian metric $g$, what conditions on $M$ and $g$ make it possible to solve for a function $\Omega:M\to \mathbb{R}$ such that $(M,\Omega^2g)$ is complete and vacuum?} smooth functions transformations $\Omega$ , start with Minkowski spacetime $(\mathbb{R}^{1,n},\eta)$, identify some open set $O\subset \mathbb{R}^{1,n}$ lying above $t=0$, and modify it by a conformal transformation $\eta\to \Omega^{2}\eta$. In that case, in spite of $M\backslash O$ being vacuum, $O$ will in general have a non-vanishing Einstein tensor $G_{g}$. But since there are Cauchy hypersurfaces (in $M\backslash O$) for $(\mathbb{R}^{1,n},\Omega^{2}\eta)$ which are exactly flat\footnote{In the language of the constraints, initial data sets of the form ($\mathbb{R}^n,g_E,0$).}, $(\mathbb{R}^{1,n},\Omega^{2}\eta)$ is not locally Cauchy if it is not isometric to $(\mathbb{R}^{1,n},\eta)$.
\end{rk}
We also make use of the following.
\begin{defn}
Given a partial Cauchy hypersurface $S$ in a spacetime $M$ we say that an open subset of $M$ is an $\epsilon$-development of $S$, denoted $D_{\epsilon}(S)(\supset S)$ if there is exists an $\epsilon>0 \:(\in \mathbb{R})$ such that $D_{\epsilon}(S)$ admits a Cauchy surface $S_{\epsilon}$ every point of which lies at distance $\geq \epsilon>0$ in the future of $S$, as measured by a normalized timelike vector field orthogonal to $S$.
\end{defn}
\begin{rk}
The non-empty interior of a causal diamond $J^+(p)\cap J^-(q)\neq \emptyset$ with $p,q\in (\mathbb{R}^{1,n},\eta)$ is not an $\epsilon$-development because its Cauchy surfaces are anchored at $\partial J^+(p)\cap \partial J^+(q)$. 
\end{rk} 
We now observe the following. 
\begin{prop}\label{Nulllines}
Let $(M,g)$ and $(M',g')$ be inextendible and locally Cauchy. Suppose that
\begin{enumerate}
	\item[(i)] $(M,g)$ has a compact Cauchy surface and no null lines, 
	\item[(ii)] $(M',g')$ is causal and hole-free.
\end{enumerate}
Then $(M,g)$ and $(M',g')$ are isometric iff they are weakly observationally indistinguishable.
\end{prop}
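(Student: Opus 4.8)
The plan. The direction ``isometric $\Rightarrow$ weakly observationally indistinguishable'' is immediate: an isometry $\psi\colon M\to M'$ restricts to an isometric embedding $\psi|_{I^-(p)}\colon I^-(p)\hookrightarrow M'$ for every $p$. So I assume the two spacetimes are weakly observationally indistinguishable, fix for each $p\in M$ an isometric embedding $\phi_p\colon I^-(p)\hookrightarrow M'$, and aim to build an isometry $M\to M'$. The strategy has four steps: (a) use (i) to find a point $p$ whose past lightcone already contains a Cauchy surface of $M$, with an $\epsilon$-collar around it; (b) push that collared surface into $M'$ by $\phi_p$; (c) recognise the image as a partial Cauchy hypersurface of $M'$ carrying the initial data of $M$, and use local Cauchyness of both spacetimes together with uniqueness of the maximal globally hyperbolic development to identify $(M,g)$ with the development in $M'$ of that surface; (d) close the argument using inextendibility of $M$.

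Step (a). By \cite{BS03} and the standard smoothing results, $M$ has a smooth, acausal, compact Cauchy surface $S$. I first claim some $p$ has $S\subset I^-(p)$. If not, fix a future-inextendible timelike curve $\gamma\colon[0,\infty)\to M$ issuing from $S$; as the $I^-(\gamma(n))$ are open and increasing, compactness of $S$ gives $x_n\in S\setminus I^-(\gamma(n))$ with a subsequential limit $x_\infty\in S\setminus\bigcup_n I^-(\gamma(n))$. Since $\gamma(n)\in I^+(S)$, the set $I^-(\gamma(n))\cap S$ is a nonempty proper open subset of the connected surface $S$, so one may choose points $z_n$ on its boundary in $S$, hence on $\partial I^-(\gamma(n))$; joining $z_n$ to $\gamma(n)$ by an achronal null geodesic and applying the limit curve theorem (using $\gamma(n)\to\infty$) one obtains, after the usual argument, an inextendible achronal null geodesic, i.e.\ a null line, contradicting (i). With such a $p$ in hand, let $S_t$ denote the flow of $S$ for time $t$ along its future unit normal; these are compact Cauchy surfaces of $M$, and for small $\epsilon>0$ one has $S_{2\epsilon}\subset I^-(p)$, hence $I^-_M(S_{2\epsilon})\subset I^-(p)$. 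Thus $D_\epsilon(S):=I^+_M(S_{-1})\cap I^-_M(S_{2\epsilon})$ is an open $\epsilon$-development of $S$, with compact Cauchy surface $S_\epsilon$ lying a distance $\epsilon$ to the future of $S$, and $D_\epsilon(S)\subset I^-(p)$.

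Steps (b)--(d). Put $\phi:=\phi_p$ and $\Sigma:=\phi(S_\epsilon)$; this is a compact connected embedded spacelike hypersurface of $M'$, and $\phi$ restricts to an isometric embedding of the collar neighbourhood of $S_\epsilon$ inside $D_\epsilon(S)$. The key claim is that $\Sigma$ is a partial Cauchy hypersurface of $M'$: edgelessness is local, while acausality in $M'$ is precisely where hypotheses (ii) enter — a causal curve in $M'$ between two points of $\Sigma$ would have to leave the embedded development $\phi(D_\epsilon(S))$, and one excludes this using causality together with hole-freeness of $M'$, the $\epsilon$-collar being the feature that makes the hole-freeness argument applicable. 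Granting this, $\Sigma$ carries an isometric copy of the initial data borne by $S_\epsilon$ in $M$ (induced metric, second fundamental form, prescribed matter), because $\phi$ is an isometry near $S_\epsilon$. Then $(\tilde D_{M'}(\Sigma),g'|_{\tilde D_{M'}(\Sigma)})$ is, by local Cauchyness of $M'$, the faithful — hence maximal globally hyperbolic — development of those data, while $(M,g)=(\tilde D_M(S_\epsilon),g|_{\tilde D_M(S_\epsilon)})$ is, by local Cauchyness of $M$, the maximal globally hyperbolic development of the same data (using $\tilde D_M(S_\epsilon)=M$, as $S_\epsilon$ is a Cauchy surface). By uniqueness of the maximal globally hyperbolic development \cite{CBG}, $(M,g)$ is isometric to $(\tilde D_{M'}(\Sigma),g'|_{\tilde D_{M'}(\Sigma)})$; composing this isometry with the inclusion into $M'$ yields an isometric embedding $M\hookrightarrow M'$, and since $M$ is inextendible its image must be all of $M'$. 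Hence $\tilde D_{M'}(\Sigma)=M'$ and $(M,g)\cong(M',g')$.

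Expected main obstacle. The crux is the acausality of $\Sigma$ in $M'$ used in the last paragraph: causality of $M'$ alone does not force a compact spacelike slice to be acausal (or even achronal), so hole-freeness of $M'$ is genuinely needed, and the $\epsilon$-development is exactly the device that lets one invoke it — without the collar one controls the image of $S$ only locally, and a causal curve in $M'$ between two points of $\Sigma$ can in principle exit the embedded region and re-enter. A secondary delicate point is the limit curve argument in Step (a), in particular upgrading the limiting future-inextendible achronal null geodesic to a genuine (inextendible, achronal) null line.
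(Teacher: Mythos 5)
Your outline tracks the paper's proof closely (find $p$ with $S\subset I^-(p)$ from the absence of null lines, push an $\epsilon$-collared Cauchy surface into $M'$ via $\phi_p$, show the image is a partial Cauchy hypersurface, and invoke the locally Cauchy property), but there are two genuine gaps. First, the acausality of $\Sigma=\phi(S_\epsilon)$ is the hard step and you only assert it (``Granting this\dots''). Moreover you attribute it to causality plus hole-freeness of $M'$, whereas the argument actually needs a different pair of hypotheses: a causal curve between two points of $\Sigma$ that exits the locally achronal collar and re-enters from the past side yields a closed causal curve, violating causality of $M'$; the case where it re-enters from the \emph{future} side is excluded by a cut-and-paste construction that produces a proper extension of $M$, i.e.\ by the \emph{inextendibility of $M$}, not by hole-freeness.

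Second, and more seriously, your final step does not close. Local Cauchyness only says that $\tilde{D}_{M'}(\Sigma)$ is a \emph{faithful} development of the data on $\Sigma$ --- it embeds isometrically into the maximal globally hyperbolic development $X$ of those data --- not that it \emph{is} $X$; ``faithful, hence maximal'' is a non sequitur. Since $M$ is inextendible and $M=D_M(S_\epsilon)$, one does get $M\cong X$; but then the embedding you obtain points the wrong way, $\tilde{D}_{M'}(\Sigma)\hookrightarrow X\cong M$, and yields no map $M\hookrightarrow M'$ unless you first show that $\Sigma$ has empty Cauchy horizon in $M'$, i.e.\ $D_{M'}(\Sigma)=M'$. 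This is exactly where hole-freeness of $M'$ is genuinely used in the paper: if $H^{\pm}(\Sigma)\neq\emptyset$ in $M'$, then pulling $\tilde{D}(\Sigma)$ back into $M$ (where $S_\epsilon$ is a Cauchy surface, so its domain of dependence strictly covers the pulled-back horizon) exhibits a future or past hole of $M'$. Without this step the appeal to uniqueness of the maximal development is unjustified, since $\tilde{D}_{M'}(\Sigma)$ could a priori be a small globally hyperbolic chunk of $M'$ with $M'$ continuing past its Cauchy horizon in a non-Cauchy-evolved way. Your Step (a) is only a sketch, but it is in the right spirit (the paper routes it through a compact-exhaustion lemma of Gao--Wald type) and is not the essential problem.
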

Note that by the examples in \cite{Mal}, Proposition \ref{Nulllines} is false without the assumption that $(M,g)$ lacks null lines. \\ \indent 
In either case of weakly o.i. or strongly o.i., it would be interesting to settle whether the compactness in (i) is necessary, cf.\@ Proposition \ref{noncompact} below.
\begin{proof}
The proof of Proposition \ref{Nulllines} starts by strengthening Theorem 1 of \cite{GW}.\footnote{In \cite{GW}, the authors assume that $(M,g)$ is null geodesically complete, satisfies the null energy condition, and the null generic condition. These assumptions implies the absence of null lines. It was then observed by Galloway that one can prove the theorem by instead assuming that $(M,g)$ lacks null lines (cf.\@ footnote for Theorem 1 of \cite{GW}), but since those details never appeared we include them for completeness.}
\begin{lem}\label{GaoWald}
Let \((M,g)\) be a spacetime without null lines. Then given any compact region \(K\), there exists another compact \(K'\supset K\) such that if \(p,q\notin K'\) and \(q\in J^+(p)-I^+(p)\) , then any causal curve \(\gamma\) connecting \(p\) to \(q\) cannot intersect \(K\). 
\end{lem}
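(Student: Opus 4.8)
The plan is to argue by contradiction with a limit curve construction. Suppose the lemma fails for some compact $K$, so that for every compact $K' \supseteq K$ there are points $p, q \notin K'$ with $q \in J^+(p) \setminus I^+(p)$ and a causal curve joining $p$ to $q$ that meets $K$; note that any such curve is achronal, since $q \notin I^+(p)$. Fix once and for all a complete Riemannian metric $h$ on $M$ together with a compact exhaustion $K = K_0 \subseteq K_1 \subseteq K_2 \subseteq \cdots$ with $\bigcup_i K_i = M$. Applying the supposed failure with $K' = K_i$ yields, for each $i$, points $p_i, q_i \notin K_i$, an achronal causal curve $\gamma_i$ from $p_i$ to $q_i$, and a point $r_i \in \gamma_i \cap K$; passing to a subsequence, $r_i \to r \in K$ by compactness.

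I would then extract a limit curve. Parametrize each $\gamma_i$ by $h$-arclength with $\gamma_i(0) = r_i$. Because $h$ is complete and the $K_i$ exhaust $M$, we have $\dist_h(K, M \setminus K_i) \to \infty$, so the two arcs of $\gamma_i$ from $r_i$ to $p_i$ and from $r_i$ to $q_i$ have $h$-length tending to infinity; hence the domains of the $\gamma_i$ eventually contain any prescribed compact interval. The standard limit curve theorem then produces a subsequence converging uniformly on compact subsets of $\mathbb{R}$ to a causal curve $\gamma \colon \mathbb{R} \to M$ with $\gamma(0) = r$. Since $\gamma$ has unit $h$-speed on all of $\mathbb{R}$ and $h$ is complete, $\gamma$ is inextendible: a causal curve converging to an endpoint would, in a coordinate chart near that point, be a graph over a bounded time-coordinate interval, hence of finite $h$-length, contradicting that arbitrarily long subarcs of $\gamma$ accumulate near the endpoint.

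Finally I would check that $\gamma$ is achronal, which is where achronality of the $\gamma_i$ enters. If not, there are parameters $t_1 < t_2$ with $\gamma(t_2) \in I^+(\gamma(t_1))$; since $I^+$ is an open relation on $M \times M$ and $\gamma_i \to \gamma$ uniformly near $t_1$ and $t_2$, for all large $i$ we get $\gamma_i(t_2) \in I^+(\gamma_i(t_1))$, while $\gamma_i$ restricted to $[t_1, t_2]$ is a causal curve joining these two points and $\gamma_i$ is achronal — a contradiction. Thus $\gamma$ is an inextendible achronal causal curve, i.e.\ a null line, contradicting the hypothesis; this forces the existence of the desired $K'$.

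The bookkeeping with the exhaustion and the openness-of-$I^+$ transfer are routine. The one step that needs genuine care is the limit curve construction — in particular, choosing a \emph{complete} auxiliary Riemannian metric so that the unbounded escape of $p_i$ and $q_i$ forces the limit curve to be inextendible rather than merely long, and keeping the parametrizations mutually compatible so that achronality passes to the limit. I expect that to be the main, though by now fairly standard, obstacle.
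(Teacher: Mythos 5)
Your proposal is correct and follows essentially the same route as the paper: assume failure for some $K$, extract achronal causal curves through $K$ whose endpoints escape every member of a compact exhaustion, take a limit point in $K$, apply the limit curve theorem to get an inextendible causal limit curve, and transfer achronality to the limit via openness of $I^+$ to produce a forbidden null line. The paper's version is terser (it cites Proposition 3.1 of Beem--Ehrlich--Easley for the limit curve step and asserts achronality of the limit without detail), but the argument is the same; your extra care with the complete auxiliary metric and arclength parametrization is exactly the bookkeeping the paper leaves implicit.
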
 
\begin{proof}
Suppose otherwise for some \(K_0\) and \(K_1\supset K_0\). Then consider a sequence of ever bigger compact sets \(K_{i+1}\subset K'_{i}\). By assumption, each \(K_i\) will have horismos\footnote{The future horismos of $p$ is defined as $J^+\backslash I^+(p)$.} related outer points \(p_i,q_i\notin K_{i}\), \(q_i\in J^-(p_i)-I^-(p_i)\) that are connected by a causal curve, necessarily achronal, that intersects \(K_0\). In considering larger compact sets, we make these causal curves longer in the sense of an auxiliary Riemannian metric. All of these curves intersect \(K_0\). Taking the limit, by compactness of \(K_0\), there is a limit point for the sequence of points lying in \(K_0\) for each causal curve linking \(p_i,q_i\), which moreover is in \(K_0\). By a standard limit curve arguments, cf.\@ Proposition 3.1 of \cite{B}, there passes an inextendible limit curve through this limit point. The limit curve is also straightforwardly seen to be achronal.\footnote{See \cite{MLC} for significantly stronger limit curve statements.} Thus we have a null line. 
\end{proof} 
With the same proof as in Corollary 1 of \cite{GW}, Lemma \ref{GaoWald} implies the following.\footnote{Lemma \ref{horizons} strengthens the main result of \cite{GW}.}
\begin{lem}\label{horizons}
Let \((M,g)\) be a spacetime with compact Cauchy surface that does not admit any null lines. Then \((M,g)\) admits a point \(p\in M\) such that \(S\subset I^-(p)\) for some Cauchy surface \(S\). 
\end{lem}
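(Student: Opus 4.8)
The plan is to avoid a direct limit‑curve construction of a null line and instead use Lemma~\ref{GaoWald} ``contrapositively'': I will show that the past lightcone of a point sitting far enough to the future of one Cauchy surface cannot even touch a Cauchy surface lying far enough to the past, and then conclude by connectedness that it must in fact contain that past surface. First, using \cite{BS03}, write $M\cong(\mathbb{R}\times\Sigma,-f\,dt^2+h_t)$ with $\Sigma=\Sigma_0$ the given compact Cauchy surface and $t$ a Cauchy temporal function, so that each level set $\Sigma_s=\{t=s\}$ is a smooth, compact, connected Cauchy surface and $t$ is strictly increasing along every future‑directed causal curve. Apply Lemma~\ref{GaoWald} to $K=\Sigma_0$ to obtain a compact $K'\supseteq\Sigma_0$, and — harmlessly, since enlarging $K'$ only weakens its defining property — replace $K'$ by a slab $\{|t|\le M\}\supseteq K'$ (compact since $\Sigma$ is).

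The first substantive step is the claim that \emph{no point $a$ with $t(a)<-M$ lies on $\partial I^-(b)$ for any point $b$ with $t(b)>M$.} Suppose it did. Then $a\in\partial I^-(b)=J^-(b)\setminus I^-(b)$, i.e.\ $b\in J^+(a)\setminus I^+(a)$, so by Lemma~\ref{GaoWald} no causal curve from $a$ to $b$ can meet $K=\Sigma_0$. But any causal curve from $a$ to $b$ runs from $t<-M<0$ to $t>M>0$, and hence, by the intermediate value theorem applied to the (continuous, strictly increasing) function $t$ along it, must cross $\Sigma_0$ — a contradiction.

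Given this, the proof finishes quickly. Fix any $p$ with $t(p)>M$ and put $S:=\Sigma_{-M-1}$. On one hand $S\cap I^-(p)\neq\emptyset$: extending a past‑directed timelike curve from $p$ to be past‑inextendible, it meets the Cauchy surface $S$, at a point which (being reached from $p$ along a timelike curve) lies in $I^-(p)$. On the other hand $S\cap\partial I^-(p)=\emptyset$: a point $a\in S$ with $a\in\partial I^-(p)=J^-(p)\setminus I^-(p)$ (using $\overline{I^-(p)}=J^-(p)$, valid as $J^-(p)$ is closed in a globally hyperbolic spacetime) would have $t(a)=-M-1<-M$ and $t(p)>M$, contradicting the claim. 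Hence $S$ is covered by the two disjoint open subsets $S\cap I^-(p)$ and $S\setminus J^-(p)$, the first nonempty; since $S$ is connected, the second is empty, so $S=S\cap I^-(p)$, i.e.\ $S\subseteq I^-(p)$ — which is exactly the assertion of the lemma, with this $S$ and $p$.

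I expect the only real work to be in spotting and setting up the first step, namely that a global time function converts Lemma~\ref{GaoWald}'s ``escape from a compact set'' into the statement that the future horismos of a deep‑past point cannot connect it to a far‑future point through $\Sigma_0$; after that, everything is a soft connectedness argument, and no delicate limit‑curve analysis is needed. (This is the same line of reasoning as in Corollary~1 of \cite{GW}, with ``no null lines'' feeding Lemma~\ref{GaoWald} in place of the completeness and curvature hypotheses used there.)
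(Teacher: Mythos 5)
Your proof is correct and follows essentially the same route as the paper's: apply Lemma~\ref{GaoWald} with $K$ a Cauchy surface, use a global time function to see that a deep-past point cannot be horismos-related to a far-future point (since any connecting causal curve would have to cross $K$), conclude $\partial I^-(p)$ misses a sufficiently early Cauchy surface $S$, and finish by connectedness of $S$. The only differences (enlarging $K'$ to a compact slab, and explicitly checking $S\cap I^-(p)\neq\emptyset$) are cosmetic.
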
 
\begin{proof}
We include this for completeness, but this argument is exactly as in \cite{GW} except that Lemma \ref{GaoWald} plays the role of Theorem 1 of \cite{GW}. Since $(M,g)$ is globally hyperbolic, there exists a continuous global time function $t:M\to \mathbb{R}$, such that each surface of constant $t$ is a Cauchy surface. Let $K=\Sigma$ and $K'$ be as in Lemma \ref{GaoWald}. Let $t_1$ and $t_2$ denote, respectively, the minmum and maximum values of $t$ on $K'$. Let $\Sigma_1$ be any Cauchy surface with $t<t_1$ and let $\Sigma_2$ denote the Cauchy surface $t=t_2$. Let $q\in I^+(\Sigma_2)$, $p\in \Sigma_1$ and suppose that $p\in \partial I^-(q)$. Since $(M,g)$ is globally hyperbolic, $J^-(q)$ is closed and so $p\in J^-(q)\backslash I^-(q)$ and thus there is a causal curve connecting $p$ and $q$. It follows from Lemma \ref{GaoWald} that this causal curve does not intersect $\Sigma$. However, this contradicts the fact that $\Sigma$ is a Cauchy surface. Consequently, there cannot exist a $p\in \Sigma_1$ and such that $p\in \partial I^-(q)$, i.e., $\partial I^-(q)\cap \Sigma_1=\emptyset$. But $I^-(q)$ is open and since $\partial I^-(q)\cap \Sigma_1=\emptyset$, the complement of $I^-(q)$ in $\Sigma_1$ is also open. Since we have $I^-(q)\cap \Sigma_1$ and $\Sigma_1$ is connected, this implies $\Sigma_1\subset I^-(q)$.
\end{proof}
We now finish the proof of Proposition \ref{Nulllines}. By Lemma \ref{horizons} there is a point $p\in M$ with $S\subset I^-(p)$ where $S$ is a Cauchy surface of $M$. \\ \\ \indent 
\textbf{$\phi(S)$ is compact}. By weak observational indistinguishability, there is an isometric embedding $\phi:I^-(p)\hookrightarrow M'$. Because $\phi$ is a diffeomorphism onto its image, the map $\phi|_S$, $\phi$ restricted to $S$, is a diffeomorphism of $S$ onto its image and thus $\phi(S)$ is compact in $M'$. \\ \\ \indent 
\textbf{$\phi(S)$ is achronal.} Suppose otherwise that $\gamma'$ is a past directed timelike curve from $x'$ to $y'$ with $x,y\in \phi(S)$. Extend $\gamma'$ to $\sigma'$ so that $\sigma'$ is a past directed timelike curve from $q'$ to $x'$ to $y'$. Note that the isometric embedding forces $\phi(S)$ to be locally achronal; that is, there is an open neighborhood $O$ around around $S$ with two connected boundary components $\partial_+O(\subset I^+(S))$, $\partial_- O$ (allocated using the orientation in $M$), such that no two distinct points in $O$ can be joined by a timelike curve in $O$. Since $O$ isometrically embeds into $M'$, a locally achronal neighborhood $O'$ exists around $\phi(S)$ in $M'$. As such, the curve $\sigma'$ must leave $\partial_-O'$ and re-enter $O'$ via either $\partial_-O'$ or $\partial_+O'$. \\ \indent 
In the former case, we can build a closed piecewise smooth timelike curve from $q'$ and back, which violates causality of $M'$. \\ \indent 
In the latter case, we will obtain a contradiction with the inextendibility of $M$. Since $I^-(S)\subset D^-(S)\subset I^-(p)$, we must have that $\sigma'$ leaves $\phi(D^-(S))$, say at some point $r'\in \partial \phi(D^-(S))$, and re-enter $\phi(I^-(p))\cap I^+(\phi(S))$. Since the global hyperbolicity of $M$ implies that every future directed timelike curve in $\phi(I^-(p))$ must eventually leave $\phi(I^-(p))$ when sufficiently extended in the future direction, we can take the re-entry point to lie on the `future' boundary of $\phi(I^-(p))$; that is, which is the endpoint of a timelike curve whose $\phi^{-1}$ pre-image has endpoint on $\partial I^-(p)$. Now consider an open neighborhood $Z'$ of $r'$ in $M'\cap \partial \phi(D^-(S))$. Consider the open subset $\phi(I^-(p))\cup Z'$ of $M'$. Now define a new spacetime $M''$ by $\phi(I^-(p))\cup Z'\cup J^+(\partial I^-(p))$. We know this can be done because the global hyperbolicity of $M$ and the locally Cauchy property of $M$ and $M'$ imply that the regions $\overline{I^-(p)}$ and $\overline{\phi(I^-(p))}\backslash \partial \phi(D^-(S))$ are isometric. We now have a spacetime $M''$ into which $M$ can be isometrically embedded as a proper subset (in virtue of the extra $Z'$ beyond $\phi(D^-(S))$), contradicting the inextendibility of $M$. \\ \\ \indent
\textbf{$\phi(S)$ is edgeless.} Compactness of $S$ and achronality of $\phi(S)$ straightforwardly implies that $\phi(S)$ is edgeless. 
\begin{rk}
At this point if $(M',g')$ is assumed globally hyperbolic, $\phi(S)$ being an edgeless compact connected achronal set means that $\phi(S)$ can be taken to be a Cauchy surface of $M'$. In that case, both $(M,g)$ and $(M',g')$ are representatives of the unique, up to isometry, maximal globally hyperbolic development of $S$, and are thus isometric. We will instead show that $(M',g')$ is globally hyperbolic. 
\end{rk}
In $(M,g)$ we can consider an $\epsilon$-development $D_{\epsilon}(S)\subset M$. Within $D_{\epsilon}(S)$, we can then find an acausal hypersurface $S_{\epsilon}$ which is still Cauchy for $M$. We know that $D_{\epsilon}(S)\subset M$ isometrically embeds in $M'$ as a small neighborhood around $\phi(S)$. The image $\phi(S_\epsilon)$ of $S_\epsilon$, now denoted $S'_{\epsilon}$, is acausal in $M'$ (by a causal version of the argument for the achronality of $\phi(S)$) and edgeless. 
We now have partial Cauchy surfaces $S_{\epsilon}$ and $S'_{\epsilon}$ in $M$ and $M'$ respectively. \\ \\ \indent 
\textbf{We have the inclusion $D^-(S'_{\epsilon})\supseteq M'\backslash I^+(S'_{\epsilon})$}. Since $(M,g)$ is globally hyperbolic and inextendible, we have $D^-(S_{\epsilon})\supseteq J^-(S_{\epsilon})$. Since $J^-(S_{\epsilon})$ isometrically embeds into $(M',g')$, if $D^-(S'_{\epsilon})$ fails to cover $M'\backslash I^+(S'_{\epsilon})$, then $S'_{\epsilon}$ must have a past Cauchy horizon $H^-(S'_{\epsilon})\neq \emptyset $ in $M'$. In that case, by the locally Cauchy property we can isometrically embed $S'_{\epsilon}$ and $\tilde{D}^-(S'_{\epsilon})$ back into $M$ using $\phi^{-1}$, and by the global hyperbolicity and inextendibility of $M$, we have that $D^-(\phi^{-1}(S'_{\epsilon}))\supset \phi^{-1}(H^-(S'_{\epsilon}))$, which contradicts the past hole-freeness of $(M',g')$. \\ \\ \indent 
\textbf{We have the inclusion $D^+(S'_{\epsilon})\supseteq M'\backslash I^-(S'_{\epsilon})$.} This follows by the same argument.\\ \\ \indent 
Since we now have $D(S_\epsilon)=M$ and $D(S'_\epsilon)=M'$, the conclusion follows from locally Cauchy. 
\end{proof}
In view of the role of null lines, we note the rigidity theorem of Galloway-Solis \cite{GS}.
\begin{thm}[Galloway-Solis \cite{GS}]\label{GallowaySolis}
Assume that the $4$-dimensional spacetime $(M^{4},g)$ 
\begin{enumerate} 
	\item[(i)] satisfies \eqref{Einstein} with $T=0$\footnote{Generalizations to Einstein-Maxwell are possible.} and $\Lambda >0$,
	\item[(ii)] is asymptotically de-Sitter\footnote{cf. \cite{GS} for definitions of asymptotically de-Sitter and the associated hypersurfaces $\mathcal{J}^{+(-)}$ in that context.},
	\item[(iii)] is globally hyperbolic, 
	\item[(iv)] there is a null line with endpoints on $\mathcal{J}^+$ and $\mathcal{J}^-$. 
\end{enumerate}
Then $(M^4,g)$ isometrically embeds as an open subset of the de-Sitter spacetime containing a Cauchy surface. 
\end{thm}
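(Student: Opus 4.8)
The plan is to combine Galloway's null splitting theorem with the conformal structure provided by hypothesis~(ii). In dimension four, equation~\eqref{Einstein} with $T=0$ and $\Lambda>0$ gives, upon tracing, $\mathrm{Scal}_{g}=4\Lambda$ and hence $\mathrm{Ric}_{g}=\Lambda g$; in particular $\mathrm{Ric}_{g}(\ell,\ell)=0$ for every null $\ell$, so the null energy condition holds, with equality. By hypothesis~(iv) there is a null line $\eta$, which by Galloway's null splitting theorem therefore lies in a smooth, achronal, \emph{totally geodesic} null hypersurface $N\subset M$ whose null generators are again null lines. The equality $\mathrm{Ric}_{g}(\ell,\ell)=0$, and not merely the inequality, is used again below: total geodesy of $N$ (vanishing null second fundamental form, so vanishing null expansion and shear), fed through the Sachs/Riccati equation, rigidly constrains the tidal components of the curvature of $g$ along $N$.

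Next I would pass to the conformal completion $(\tilde M,\tilde g)=(\tilde M,\Omega^{2}g)$ of hypothesis~(ii), with spacelike conformal boundary pieces $\mathcal{J}^{\pm}$, and denote by $x^{-}\in\mathcal{J}^{-}$ and $x^{+}\in\mathcal{J}^{+}$ the past and future endpoints of $\eta$. The aim at this stage is to show that in $\tilde M$ the hypersurface $N$ is simultaneously the future null cone of the ideal point $x^{-}$ and the past null cone of $x^{+}$: every generator of $N$ issues from $x^{-}$ and refocuses entirely at $x^{+}$, exactly as the observer (cosmological) horizon does in de Sitter space. Since $\mathcal{J}^{\pm}$ are spacelike and the generators of $N$ are complete null lines, each generator acquires a past endpoint on $\mathcal{J}^{-}$ and a future endpoint on $\mathcal{J}^{+}$; achronality of $N$, together with the fact that $N$ arises as a component of $\partial I^{+}(\eta)$ and of $\partial I^{-}(\eta)$, pins these endpoints to $x^{-}$ and $x^{+}$, so that the closure of $N$ in $\tilde M$ is $N\cup\{x^{-},x^{+}\}$. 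Global hyperbolicity of $M$ and the spacelike character of $\mathcal{J}^{\pm}$ then make $N\cup\{x^{-},x^{+}\}$ a closed edgeless achronal set, and restricting the resulting separation to $M$ yields $M=I^{+}(x^{-})\,\cup\,N\,\cup\,I^{-}(x^{+})$ with the two chronological pieces disjoint. This is where the full strength of ``asymptotically de Sitter'' is spent, and it must be arranged so as not to presuppose the conclusion.

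For the rigidity step I would work in the completion, where $N$ is the future light cone issuing from the smooth vertex $x^{-}$. One sets up the characteristic (light-cone) initial value problem for the conformal Einstein equations with $\Lambda>0$: the free data is the conformal class of the cross-sectional metrics of the cone along the affine parameter, total geodesy of $N$ makes this conformal class constant, and smoothness of $\tilde g$ at the vertex forces every cross-section to be a round two-sphere --- precisely the de Sitter data. Uniqueness for this characteristic problem identifies $I^{+}(x^{-})$ isometrically with $I^{+}(\bar x^{-})$ in de Sitter for a boundary point $\bar x^{-}$; the symmetric argument based at $x^{+}$ identifies $I^{-}(x^{+})$ with the complementary region $I^{-}(\bar x^{+})$, and the two identifications agree along $N$. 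Since $M=I^{+}(x^{-})\cup N\cup I^{-}(x^{+})$, this produces an isometric embedding of $M$ onto an open subset of de Sitter; and because any Cauchy surface of $M$ must meet $N$ as well as both of $I^{+}(x^{-})$ and $I^{-}(x^{+})$, that open subset contains a Cauchy surface.

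I expect the rigidity step to be the principal obstacle. Total geodesy is information only along the codimension-one set $N$, and converting it into an isometry on an open set requires either the light-cone characteristic problem in the conformal framework --- whose well-posedness and the precise admissible data at a vertex lying on $\mathcal{J}^{-}$ demand care --- or the equivalent route through Friedrich's asymptotic initial value problem, where one must show directly that the induced conformal metric on $\mathcal{J}^{-}$ is the round $S^{3}$. In either formulation the real content is that a shear-free null cone which refocuses at an ideal point carries no nontrivial ``gravitational radiation'', and this is exactly where the equality $\mathrm{Ric}(\ell,\ell)=0$ is indispensable. A secondary difficulty is the conformal bookkeeping of the second paragraph: confirming inside $\tilde M$ that the generators of $N$ genuinely reach, and refocus on, $\mathcal{J}^{\pm}$, which is precisely where the smoothness and spacelike nature of the conformal boundary are needed.
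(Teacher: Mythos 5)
The first thing to note is that the paper does not prove this statement: Theorem \ref{GallowaySolis} is quoted from Galloway--Solis \cite{GS} and used as a black box (its only role is to feed Corollary \ref{deSitter}), so there is no internal proof to compare yours against. Measured against the published proof in \cite{GS}, your opening stage is essentially the right one and is in fact the strategy of that paper: $\mathrm{Ric}_g=\Lambda g$ gives the null energy condition with equality on null vectors, the null line is complete because its endpoints lie on $\mathcal{J}^{\pm}$, and Galloway's null splitting theorem produces the closed, achronal, totally geodesic null hypersurface $N=\partial I^{+}(\eta)=\partial I^{-}(\eta)$ whose generators are null lines running from the ideal past endpoint $x^{-}$ to the ideal future endpoint $x^{+}$ --- the ``eternal observer horizon'' picture on which \cite{GS} is built.

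The gap is in the second half, which carries all the weight. Two concrete points. First, the separation $M=I^{+}(x^{-})\sqcup N\sqcup I^{-}(x^{+})$ does not follow merely from $N\cup\{x^{-},x^{+}\}$ being closed, edgeless and achronal in $\tilde M$: such a set separates its chronological future from its chronological past, but you must still exclude points of $M$ lying in neither $I^{+}(x^{-})$ nor $I^{-}(x^{+})$ (``outside both horizons''), and that requires the precise content of the asymptotically de Sitter condition together with global hyperbolicity, not just the spacelikeness of the boundary. Second, and more seriously, the rigidity step you propose --- a characteristic initial value problem for the conformal field equations posed on a null cone whose vertex lies on $\mathcal{J}^{-}$ --- is not an off-the-shelf result, and you offer no argument for its well-posedness: the known cone-vertex theorems concern regular interior vertices of the physical metric, whereas here the physical metric degenerates at $x^{-}$, and for $\Lambda>0$ the available Friedrich-type well-posedness is for the asymptotic initial value problem with data on the spacelike $\mathcal{J}^{-}$. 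Converting ``$N$ is shear-free and refocuses at an ideal point'' into ``the data is that of de Sitter'' (conformally round boundary metric and vanishing radiation field) is precisely the technical heart of the cited proof, and in your write-up it is asserted rather than derived, as you yourself acknowledge. So the proposal is a correct road map of the first half of the argument plus an honest identification of where the difficulty lies, but it is not yet a proof.
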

Together, Proposition \ref{Nulllines} and Theorem \ref{GallowaySolis} imply the following.
\begin{cor}\label{deSitter}
Given two $4$-dimensional spacetimes $(M,g)$ and $(M',g')$ assume that
\begin{enumerate} 
	\item[(i)] $(M,g)$ and $(M',g')$ satisfy \eqref{Einstein} with $T=0$ and $\Lambda >0$,
	\item[(ii)] $(M,g)$ is inextendible and has a compact Cauchy surface,
	\item[(iii)] $(M,g)$ is asymptotically de-Sitter but not isometric to de-Sitter,
	\item[(iv)] $(M',g')$ is inextendible, causal and hole-free.
\end{enumerate}
Then $(M,g)$ and $(M',g')$ are isometric iff they are weakly observationally indistinguishable. 
\end{cor}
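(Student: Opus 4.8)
The plan is to reduce the statement to Proposition~\ref{Nulllines}, so the bulk of the work is checking that $(M,g)$ and $(M',g')$ satisfy the hypotheses of that proposition. The implication that being isometric entails being weakly observationally indistinguishable is immediate, so only the converse needs argument. Several of the hypotheses are already in hand: both spacetimes are inextendible by (ii) and (iv); $(M,g)$ has a compact Cauchy surface by (ii) and is therefore globally hyperbolic; $(M',g')$ is causal and hole-free by (iv); and both spacetimes are locally Cauchy, since by (i) each satisfies \eqref{Einstein} with $T=0$ and the fixed constant $\Lambda>0$, for which the initial value problem is well posed (the standing assumption of this section). Thus the single remaining point is to show that $(M,g)$ admits no null lines.

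I would establish this by contradiction using Theorem~\ref{GallowaySolis}. Suppose $\gamma$ is a null line in $(M,g)$. Since $(M,g)$ is globally hyperbolic and asymptotically de-Sitter, an inextendible causal curve in $(M,g)$ acquires a past endpoint on $\mathcal{J}^{-}$ and a future endpoint on $\mathcal{J}^{+}$ in the conformal completion; in particular the inextendible achronal causal curve $\gamma$ has endpoints on $\mathcal{J}^{-}$ and $\mathcal{J}^{+}$. Hence all four hypotheses of Theorem~\ref{GallowaySolis} are in force: (i) from (i), (ii) from the first clause of (iii), (iii) from global hyperbolicity, and (iv) from the previous sentence. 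Theorem~\ref{GallowaySolis} then yields an isometric embedding $\phi$ of $(M,g)$ onto an open subset $U$ of the de-Sitter spacetime. Now $U=\phi(M)$ is open and connected and is isometric to $(M,g)$, which is inextendible; if $U$ were a proper subset of de-Sitter, then de-Sitter, being connected, would be a strictly larger connected spacetime into which $(M,g)$ isometrically embeds as a proper subset, contradicting inextendibility of $(M,g)$. Hence $U$ is all of de-Sitter, so $(M,g)$ is isometric to de-Sitter, contradicting the second clause of (iii). Therefore $(M,g)$ has no null lines, and with every hypothesis of Proposition~\ref{Nulllines} now verified, that proposition applies and gives precisely the asserted equivalence for $(M,g)$ and $(M',g')$.

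I expect the delicate step to be the claim that a null line of an asymptotically de-Sitter spacetime must terminate on $\mathcal{J}^{-}$ and $\mathcal{J}^{+}$, so that hypothesis (iv) of Theorem~\ref{GallowaySolis} is met as stated; this has to be read off from the precise definition of ``asymptotically de-Sitter'' adopted in \cite{GS} and the causal structure of the spacelike conformal boundary $\mathcal{J}^{\pm}$. By comparison, the rigidity step (passing from ``open subset of de-Sitter'' plus inextendibility to ``isometric to de-Sitter'') and the verification of the locally Cauchy property are routine.
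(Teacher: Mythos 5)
Your proposal is correct and follows exactly the paper's intended route: the paper offers no written proof beyond the remark that Proposition~\ref{Nulllines} and Theorem~\ref{GallowaySolis} ``together imply'' the corollary, and your argument (no null lines, else Galloway--Solis rigidity plus inextendibility would force $(M,g)$ to be de-Sitter, contradicting (iii)) is precisely that reduction. The one point you rightly flag as delicate --- that a null line in a globally hyperbolic asymptotically de-Sitter spacetime must acquire endpoints on $\mathcal{J}^{\pm}$ so that hypothesis (iv) of Theorem~\ref{GallowaySolis} is met --- is likewise left implicit by the paper, so your treatment is no less complete than the original.
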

Note that the assumptions of Corollary \ref{deSitter} just falls short of astrophysical relevance on account of the assumption that $(M,g)$ be \textbf{past} asymptotically de-Sitter, which is not supported by current data. A more desirable statement would be welcome.\footnote{\cite{Ming2} contains results precluding the existence of null lines based on astrophysically interesting assumptions.} \\ \\
\textbf{(B)}. We say that a spacetime \((M,g)\) is \textit{Cauchy friendly} if it is weakly locally Cauchy\footnote{Weakly locally Cauchy replaces $\tilde{D}(S)$ with $D(S)$.} and there are no points \(p\in M\) such that \(J^-(p)\supseteq M\). We now show the following, which guarantees that genuine predictions extend a little beyond what is suggested in Proposition \ref{notgenuine}.
\begin{prop}\label{genuine}
Given two Cauchy friendly spacetimes \((M,g)\) and \((M',g')\), assume that
\begin{enumerate}
\item[(i)] there is a partial Cauchy surface $S\subset J^-(q)\subseteq D(S)$ for some point $q\in M$,
\item[(ii)] there is an isometry $\phi:J^-(q)\to J^-(q')$.
\end{enumerate}
Then there is an isometric embedding \(\psi:A \hookrightarrow M'\) for some $A\supsetneq J^-(q)$ such that 
\begin{itemize}
\item{\(\psi_{\mid J^-(q)}=\phi\),}
\item{\(A\) and $\psi(A)$ contain points in the domain of verifiable prediction of \(q\) and $q'$.}
\end{itemize}
\end{prop}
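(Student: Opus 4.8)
The plan is to leverage the $\epsilon$-development machinery from the proof of Proposition~\ref{Nulllines}, adapted to the causal-diamond setting of $J^-(q)$. First I would unpack hypothesis (i): since $S$ is a partial Cauchy surface with $S \subset J^-(q) \subseteq D(S)$, the region $J^-(q)$ is globally hyperbolic with Cauchy surface $S$, and the ``top'' of $J^-(q)$ meets the boundary $\partial J^-(q)$ along the past lightcone of $q$. The issue, exactly as flagged in the remark after the $\epsilon$-development definition, is that a causal diamond is \emph{not} an $\epsilon$-development because its Cauchy surfaces are anchored on $\partial J^-(q)$. So the first real step is to push slightly to the past: choose a Cauchy surface $S_\epsilon$ for $D(S)$ lying a uniform distance $\epsilon > 0$ in the past of $q$ (measured along the normalized normal field), so that $D_\epsilon$, the portion of $J^-(q)$ below $S_\epsilon$, genuinely is an $\epsilon$-development, and $J^-(q) = D_\epsilon \cup (\text{a collar reaching up to } \partial J^-(q))$.

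Next I would transport this structure through $\phi$. By hypothesis (ii), $\phi$ is an isometry $J^-(q) \to J^-(q')$, so $\phi(S_\epsilon) =: S'_\epsilon$ is an acausal edgeless hypersurface in $M'$, acausal by the same closed-causal-curve/inextendibility dichotomy used for the achronality of $\phi(S)$ in Proposition~\ref{Nulllines} (here causality of $M'$, part of Cauchy friendliness, rules out the first horn, and we do not even need the second horn since we are inside $J^-(q')$ which already embeds). Then, since $(M,g)$ is weakly locally Cauchy and globally hyperbolic, $D^-(S_\epsilon) \supseteq J^-(S_\epsilon) \supseteq$ the whole sub-diamond, and by weak local Cauchy-ness the development $D(S'_\epsilon)$ in $M'$ is determined up to isometry by Cauchy evolution of the data on $S'_\epsilon$. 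The key point is that $D(S_\epsilon)$ in $M$ strictly contains $J^-(q)$: because $S_\epsilon$ sits below $q$, the future evolution $D^+(S_\epsilon)$ pokes a little way above $\partial J^-(q)$, i.e.\ past the future horismos of $q$, and these extra points lie in $I^+(q) \setminus J^-(q)$ while still belonging to $D^+(S_\epsilon)$ with $S_\epsilon$ closed and achronal in $J^-(q)$ — hence they are verifiable predictions of $q$. Setting $A$ to be (the $\phi^{-1}$-image of) the isometric copy of this enlarged development, and $\psi$ the isometry furnished by the faithful-development property agreeing with $\phi$ on $J^-(q)$ (which it does, since $J^-(q) \subset \tilde D(S_\epsilon)$ and the development is unique up to isometry restricting correctly on the common sub-Cauchy-data), gives the claim, with $\psi(A) \subset M'$ containing the mirrored verifiable predictions of $q'$.

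The main obstacle I expect is the bookkeeping around \emph{where exactly} $D(S_\epsilon)$ extends and showing the extension is nontrivial \emph{in the right region} — that is, that $D^+(S_\epsilon) \setminus J^-(q)$ is nonempty and lands inside $I^+(q) \setminus J^-(q)$ rather than escaping out some lateral boundary of the diamond. This needs a careful local argument near a point $r \in \partial J^-(q) \cap I^+(q)^c$... more precisely near a point on the future horismos of $q$ where the horismos is ``spacelike-ish'': one wants a small future neighborhood of such a point to be covered by inextendible past causal curves that all dip below $S_\epsilon$ and hence meet it, which is where compactness/global hyperbolicity of $J^-(q)$ and the finite-$\epsilon$ margin do the work. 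A secondary subtlety is matching $\psi$ with $\phi$ on the nose: one should note that $\phi$ itself realizes $J^-(q')$ as (part of) the faithful development of $S'_\epsilon$, invoke uniqueness up to isometry of that development, and compose with the resulting isometry, checking it fixes $S'_\epsilon$ and therefore agrees with $\phi$ throughout the overlap by the standard propagation-of-isometries argument. Finally one records that $J^-(q) \cup J^-(p) \subseteq A$ for $p$ any of the verifiable points produced, so this also sharpens Proposition~\ref{notgenuine} in the announced direction.
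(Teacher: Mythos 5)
Your overall strategy---extend $\phi$ beyond $J^-(q)$ by invoking uniqueness of the Cauchy development of $S$, and observe that points just beyond are verifiable---is the same idea as the paper's, but there is a genuine gap on the $M'$ side. Everything you do happens inside $M$ (or inside the abstract maximal development): you enlarge $D(S_\epsilon)$ past $\partial J^-(q)$ and then declare $\psi$ to be ``the isometry furnished by the faithful-development property.'' But the faithful-development property only tells you that $\tilde D(S')$ \emph{as realized inside $M'$} embeds into the maximal development $X(S)$; it does not tell you that $M'$ actually contains any points of that development beyond $J^-(q')$. If $D(S')$ in $M'$ were exactly $J^-(q')$, there would be nowhere for $\psi(A\setminus J^-(q))$ to land. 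The paper closes this by first proving $J^-(q')\subsetneq D(S')$ in $M'$: $J^-(q')$ is closed in the globally hyperbolic open set $D(S')$, so equality would force $D(S')$ to be clopen, hence $D(S')=M'=J^-(q')$, contradicting Cauchy friendliness of $M'$ (this is precisely where the ``no $p$ with $J^-(p)\supseteq M$'' clause earns its keep, and it is used once for $M$ and once for $M'$). It then embeds \emph{both} $D(S)$ and $D(S')$ into $X(S)$ via $\rho,\rho'$ with $\rho(J^-(q))=\rho'(J^-(q'))$ and takes the nonempty intersection $I=[\rho(D(S))\setminus\rho(J^-(q))]\cap[\rho'(D(S'))\setminus\rho'(J^-(q'))]$, setting $A=\rho^{-1}[\rho(J^-(q))\cup I]$. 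This intersection step is not optional: the two domains of dependence are each only \emph{open subsets} of $X(S)$, and neither need contain the other, so the region you adjoin must be certified to lie in both. Your proposal skips both the strictness argument in $M'$ and the intersection.

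A secondary point: your anticipated ``main obstacle''---controlling where $D^+(S_\epsilon)\setminus J^-(q)$ lands relative to the lateral boundary and the horismos---is a non-issue. Verifiable points are required to lie in $I^+(q)\setminus J^-(q)$, and since $S$ is acausal and edgeless, $D(S)$ is open and contains $q$, so any neighborhood of $q$ inside $D(S)$ already meets $I^+(q)\subseteq I^+(S)$, giving points of $D^+(S)\cap(I^+(q)\setminus J^-(q))$; nothing near the lateral part of $\partial J^-(q)$ is relevant, and no $\epsilon$-collar construction is needed for this proposition (the paper reserves the $\epsilon$-development device for Proposition \ref{Nulllines}, where only an isometric \emph{embedding} of $I^-(p)$, not an isometry of $J^-(q)$, is available and acausality of the image must be fought for). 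Here achronality/acausality of $S'=\phi(S)$ comes for free because $\phi$ is an isometry onto $J^-(q')$.
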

\begin{proof}
First we make some basic observations, and throughout we denote $S'\equiv \phi(S)$. \\ \\ \indent 
\textbf{We have $J^-(q)\subsetneq D(S)$}. Since \(J^-(q)\) lies in a globally hyperbolic set \(D(S)\), \(J^-(q)\) is closed. Moreover, since \(S\) is acausal and edgeless, \(D(S)\) must be open. From this it follows that the inclusion of \(J^-(q)\subseteq D(S)\) is strict. Suppose otherwise that \(J^-(q)=D(S)\). In that case \(D(S)\) is both open and closed in \(M\), and since \(M\) is connected, that implies \(D(S)=M\). But then \(M=J^-(q)\), in contradiction with Cauchy friendly. Thus \(J^-(q)\) is a closed proper subset of the open set \(D(S)\). \\ \\ \indent
\textbf{$S'$ is a partial Cauchy hypersurface.} Since $\phi$ is a diffeomorphism onto its image, we know that $S'$ is compact. Unlike the arguments given in Proposition \ref{Nulllines}, $\phi(S)$ is acausal and edgeless by the fact that $\phi$ is an isometry (as opposed to merely an embedding). Since $S'$ belongs to $J^-(q')$, if $S'$ is acausal in $M'$, then $S$ is also acausal in $M$, which contradicts (ii). Thus $S'$ is a partial Cauchy surface in $M'$.\\ \\ \indent 
\textbf{We have $J^-(x')\cap J^+(S')\subseteq D^+(S')$ for any $x'\in \phi(J^-(q))$}. We seek to show that any past inextendible causal curve in \(J^-(x')\) with with future endpoint \(x'\in \phi(J^-(q))\) intersects $S'$. By assumption, we have an isometric embedding $\phi:J^-(q)\cap D^+(S)\hookrightarrow M'$ where $S$ is a closed achronal set. Consider any point $x\in J^-(q)\cap J^+(S)$. By the isometry $\phi$, we have $\phi(J^-(x))= J^-(x')$. Note first that by definition, there is a neighborhood of \(0\in T_xM\) such that the exponential map of the past non-spacelike vectors in that neighborhood is contained in \(D^+(S)\). By the isometry $\phi$, the same is true for $x'$ with respect to \(\phi(D^+(S))\), in particular there is a neighborhood \(U_{x'}\) of \(x'\) such that \(U_{x'}\cap J^-(x')\subset \phi (D^+(S))\). Seeking a contradiction, suppose there is a past inextendible causal curve \(\gamma'\) with future endpoint \(x'\) that does not intersect $S'$. By the property aforementioned, there is at least a segment of $\gamma'$ contained in $\phi(D^+(S))$. The curve defined by \(\gamma\equiv \phi^{-1}(\gamma')\) is causal and ends at \(x\), and is thus entirely contained in \(D^+(S)\). But then \(\gamma\) does not intersect \(S\), which is a contradiction. \\ \\ \indent 
\textbf{Similarly, we have $J^-(S)\subseteq D^-(S')$.} This follows from (ii), the isometry $\phi$, and the argument just above. \\ \\ \indent 
\textbf{We have $J^-(q')\subsetneq D(S')$.} This proceeds as above, which now leads to a contradiction with the Cauchy friendliness of $M'$. \\ \\ \indent 
We have two closed sets \(J^-(q')\) and \(J^-(q)\) each strictly contained in the open sets \(D(S')\) and \(D(S)\). We now seek to show the existence of \(\psi\). Although there may be no isometric embedding of $D(S)$ into $M'$, we need only show that it is possible to non-trivially extend the pre-image of $\phi$ beyond \(J^-(q)\), which will thus enter $D(S)\backslash J^-(q)$.\\ \indent
Consider now the unique (up to isometry) maximal globally hyperbolic development \(X(S)\) of \(S\), where $X(S)$ denotes one representative among all isometric developments. By the isometry $\phi$, we know that $S$ and and $S'$ are isometric as initial data sets, and thus that \(X(S)\) is the unique (up to isometry) maximal globally hyperbolic development of both \(S\) and \(S'\). By locally Cauchy, it follows that both \(D(S)\) and \(D(S')\) can be isometrically embedded into \(X(S)\).\footnote{Note here that we could use a weaker version of locally Cauchy here that involves $D(S)$ rather than $\tilde{D}(S)$.} Denote these isometries by \(\rho:D(S)\hookrightarrow X(S)\) and \(\rho':D(S')\hookrightarrow X(S)\). \\ \indent 
It is obvious that \(\rho(D(S))\cap \rho'(D(S'))\neq \emptyset\) and since \(\rho\) and \(\rho'\) are local diffeomorphisms, both \(\rho(D(S))\) and \(\rho(D(S'))\) are open in $X(S)$, and moreover both \(\rho(J^-(q))\) and $\rho'(J^-(q')$ are closed in \(X(S)\). We now define the following set in $X(S)$
\[[\rho(D(S))-\rho(J^-(q))] \cap [\rho'(D(S'))-\rho(J^-(q'))]\equiv I\]
The openness of $\rho(D(S)), \rho'(D(S'))$, and the fact that we may choose $\rho,\rho'$ such that $\rho(J^-(q))=\rho'(J^-(q'))$ means that the strict inclusions $D(S)\backslash J^-(q),D(S')\backslash J^-(q')$ and $\neq \emptyset$ extend to $\rho$ and $\rho'$, i.e., $\rho(D(S))\backslash \rho(J^-(q))\neq \emptyset$ and $\rho(D(S'))\backslash \rho(J^-(q'))\neq \emptyset$. It follows that $I\neq \emptyset$. \\ \indent 
We can now identify the set \(A \equiv \rho^{-1}[\rho(J^-(q))\cup I]\) as having the desired properties, i.e.\@ there exists an isometric embedding of \(\psi:A\hookrightarrow M'\) such that $\psi|_{J^-(q)}=\phi$.
\end{proof}

We can also consider what happens after lifting the compactness assumption on $S$.
\begin{prop}\label{noncompact}
Let $S$ be a partial Cauchy hypersurface in $M$ and $D_{\epsilon}(S)$ an $\epsilon$-development of $S$. Let $\phi:D_{\epsilon}(S)\hookrightarrow M'$ be an isometric embedding into a hole-free spacetime $M'$. Then either
\begin{itemize}
\item{$\phi(S)$ is causal in $M'$,}
\item{or $\phi(S)$ is a partial Cauchy hypersurface in $M'$.}
\end{itemize} 
In the latter case, if $M,M'$ are locally Cauchy and $M'$ is inextendible, then there is an isometric embedding $\psi:D(S)\hookrightarrow M'$ with $\psi|_{D_{\epsilon}(S)}=\phi$.
\end{prop}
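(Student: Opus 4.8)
I would settle the two-way alternative first and then, when $\phi(S)$ turns out to be a partial Cauchy hypersurface, run an argument in the spirit of the endgame of Proposition~\ref{Nulllines}, using the maximal globally hyperbolic development of $S$ as an intermediary through which all the relevant regions embed. (I take $S$ connected, consistent with the locally Cauchy hypothesis; otherwise one argues component by component.) Since $\phi$ is an isometric embedding and $S$ is a smooth spacelike hypersurface, $\phi(S)$ is a smooth spacelike \emph{embedded} hypersurface of $M'$; in particular it is locally acausal and $\operatorname{edge}(\phi(S))\subseteq\overline{\phi(S)}\setminus\phi(S)$, because no point of a smooth spacelike hypersurface can lie in its own edge. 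If two distinct points of $\phi(S)$ are joined by a causal curve of $M'$ we are in the first alternative; otherwise $\phi(S)$ is acausal and it suffices to show it is edgeless, i.e.\ closed in $M'$. As $S$ is a closed subset of $M$ (an acausal edgeless set is a closed topological hypersurface) and $\phi$ is a homeomorphism onto the open set $\phi(D_{\epsilon}(S))$, $\phi(S)$ is closed in $\phi(D_{\epsilon}(S))$, so a point of $\overline{\phi(S)}\setminus\phi(S)$ would have to lie on $\partial\phi(D_{\epsilon}(S))$. I would exclude such a point using hole-freeness of $M'$: the $\epsilon$-development carries into $M'$ a globally hyperbolic slab about $S$ whose Cauchy surface $S_{\epsilon}$ lies a fixed distance to the future of $S$, and an accumulation point $y\in\partial\phi(D_{\epsilon}(S))$ of $\phi(S)$, together with the honest spacetime neighbourhood of $y$ in $M'$ and the map $\phi^{-1}$, would exhibit a partial Cauchy hypersurface of $M'$ whose Cauchy development is properly enlarged inside an ambient spacetime --- precisely what hole-freeness forbids. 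This point-set step plays here the role that compactness of $S$ plays in Proposition~\ref{Nulllines}. Thus $\phi(S)$ is acausal and edgeless, hence a partial Cauchy hypersurface of $M'$.

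\textbf{The latter case: setup.} Write $S'\equiv\phi(S)$. As $\phi$ is an isometry on the open set $D_{\epsilon}(S)\supseteq S$, it carries the first and second fundamental forms of $S$ onto those of $S'$, so $\phi|_S$ is an isometry of initial data sets, and $S$, $S'$ share a maximal globally hyperbolic development $X$ in which the canonical images of $S$ and of $S'$ coincide as a single Cauchy surface $\Sigma_X$, the two canonical embeddings being related by $\phi|_S$. By the locally Cauchy hypothesis (in the $D(\cdot)$ form of the remark after that definition), applied to $S\subset M$ and to $S'\subset M'$, there are isometric embeddings $\rho\colon D(S)\hookrightarrow X$ and $\rho'\colon\tilde D(S')\hookrightarrow X$, and we normalize them so that $\rho|_S$ and $\rho'|_{S'}$ are the canonical embeddings; then $\rho|_S=(\rho'\circ\phi)|_S$. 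Since $\rho$ and $\rho'\circ\phi$ are isometric embeddings into $X$ agreeing on $S$, they share the same $1$-jet along $S$ (each sends the future unit normal of $S$ to the future unit normal of $\Sigma_X$), so by rigidity of isometries they agree on a connected open neighbourhood $W$ of $S$.

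\textbf{The latter case: conclusion.} Because $\Sigma_X=\rho'(S')$ is a Cauchy surface of $X$, we have $X=D(\Sigma_X)$, so $\rho'$ maps $H^+(S')$ (computed in $M'$, and lying to the future of $S'$) into $D^+(\Sigma_X)=D^+(\rho'(S'))$; were $H^+(S')\neq\emptyset$ this would give $\rho'(H^+(S'))\cap D^+(\rho'(S'))\neq\emptyset$ with $\rho'(S')$ acausal in $X$, i.e.\ $M'$ future holed, contrary to hypothesis. Symmetrically $H^-(S')=\emptyset$, so $D(S')$ is both open and closed in the connected manifold $M'$ and hence $D(S')=M'$: $S'$ is a Cauchy surface of $M'$. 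Then $M'$ is a globally hyperbolic solution of \eqref{Einstein} with Cauchy surface $S'$, hence embeds isometrically into the maximal development $X(S')=X$; any such embedding is onto since $M'$ is inextendible, so $\rho'\colon\tilde D(S')=M'\xrightarrow{\ \sim\ }X$ is an isometry. Consequently $\psi\equiv(\rho')^{-1}\circ\rho\colon D(S)\hookrightarrow M'$ is a well-defined isometric embedding, equal to $(\rho')^{-1}\circ(\rho'\circ\phi)=\phi$ on $W$; since $\psi$ and $\phi$ are isometric embeddings of the connected set $D_{\epsilon}(S)$ into $M'$ agreeing on the open set $W\supseteq S$, rigidity yields $\psi|_{D_{\epsilon}(S)}=\phi$, as required.

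\textbf{Main obstacle.} The two delicate points are the hole-freeness exclusion of an edge of $\phi(S)$ in the dichotomy, and --- more substantially --- the chain ``$S'$ a partial Cauchy hypersurface of $M'$'' $\Rightarrow$ ``$S'$ Cauchy for $M'$'' $\Rightarrow$ ``$M'\cong X$'' (equivalently, $\rho(D(S))\subseteq\rho'(\tilde D(S'))$): this is the unique place where hole-freeness, local Cauchyness, and inextendibility of $M'$ are used together, and arranging the definition of ``future holed'' to apply with $X$ playing the role of the ambient spacetime is the crux of the argument.
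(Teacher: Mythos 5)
Your handling of the latter case is essentially the paper's argument in different packaging: both proofs use hole-freeness to force $H(\phi(S))=\emptyset$ and inextendibility of $M'$ to finish, the only difference being that you route everything through the maximal development $X$ (as in the endgame of Proposition~\ref{Nulllines}) while the paper works with the maximal embeddable globally hyperbolic subset $\mathcal{D}(S)\subseteq D(S)$; your rigidity argument for $\psi|_{D_{\epsilon}(S)}=\phi$ supplies a detail the paper leaves implicit. Similarly, your observation that no point of the closed embedded hypersurface $\phi(S)\subset\phi(D_{\epsilon}(S))$ can lie in its own edge is a more economical substitute for the paper's limit-of-causal-curves argument in a small globally hyperbolic neighbourhood of $q=\phi^{-1}(q')$.

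The genuine gap is in the dichotomy, at precisely the step you flag as the main obstacle: excluding points of $\operatorname{edge}(\phi(S))$ on $\partial\phi(D_{\epsilon}(S))$. Your sketch (``an accumulation point $y$ \dots would exhibit a partial Cauchy hypersurface of $M'$ whose Cauchy development is properly enlarged'') never identifies the hypersurface or the embedding that witnesses the hole, and it does not actually deploy the defining property of an $\epsilon$-development. The paper's mechanism is: if $\operatorname{edge}(\phi(S))\neq\emptyset$ (necessarily disjoint from $\phi(S)$ by the first step), then $H(\phi(S))$ is ruled by null geodesics meeting $\operatorname{edge}(\phi(S))$; because $D_{\epsilon}(S)$ contains a Cauchy surface $S_{\epsilon}$ a \emph{uniform} distance $\epsilon$ above $S$, these generators must penetrate $\phi(D_{\epsilon}(S))$, and their pullback under $\phi^{-1}$ lands inside $D(S)$ --- so the development of $\phi(S)$ in $M'$ stops strictly short of what its isometric copy in $M$ achieves, which is the holed configuration. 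This is exactly where ``$\epsilon$-development'' (as opposed to ``open globally hyperbolic subset of $D(S)$'') earns its keep; the paper's remark about deleting a half-space at $t=0$ from Minkowski shows the step fails without it, so no argument that ignores the uniform $\epsilon$ can close this case. A minor additional slip: ``edgeless, i.e.\ closed'' is not an equivalence for acausal sets (a closed acausal set can meet its own edge), though your subsequent reasoning does not actually rely on it.
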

Thus, after basic assumptions like hole-freeness, locally Cauchy and inextendibility, the only obstruction concerns the acausality of $\phi(S)$. It may be that $\phi(S)$ is always acausal if $M'$ satisfies some causality assumption, eg.\@ causally simple, causally continuous\footnote{cf.\@ pg.\@ 59 of \cite{B}}, etc.\\ \indent
Note also that $\phi(S)$ need not be a partial Cauchy hypersurface if we replace $D_{\epsilon}(S)$ by `an open globally hyperbolic subset of $D(S)$' (delete a half-space at $t=0$ from $(\mathbb{R}^{1,n},\eta)$). 
\begin{proof} 
First we show the second statement. If $\phi(S)$ is a partial Cauchy hypersurface, we know that $D(S)$ and $D(\phi(S))$ are both open subsets of $M$ and $M'$ respectively, and since $M,M'$ are locally Cauchy, we know that $D(S)$ and $D(\phi(S))$ share a common (isometric) subset extending beyond $D_{\epsilon}(S)$. Let $\mathcal{D}(S)$ denote the maximal open globally hyperbolic subset of $D(S)$ for which there is an isometric embedding $\psi':\mathcal{D}(S)\hookrightarrow M'$. \\ \indent 
Now suppose that $D(S)$ does not isometrically embed into $M'$, i.e.\@ $\mathcal{D}(S)\subsetneq D(S)$. If $H(\psi'(S))\neq \emptyset$ then by locally Cauchy we can use $\psi'^{-1}$ to embed $\tilde{D}(\psi'(S))$ into $M$ and contradict the hole-freeness of $M'$. If $H(\psi'(S))= \emptyset$, then $M'=D(\psi'(S))$. But in that case, by locally Cauchy, $M'$ isometrically embeds into $M$ as a proper subset of $M$, contradicting the inextendibility of $M'$.\\ \indent   
Now we show the first part: if $\phi(S)$ is acausal, then it is a partial Cauchy hypersurface in $M'$. Supposing that $\text{edge}(\phi(S))\neq \emptyset$, we will show that $\text{edge}({\phi(S)})\cap \phi(S)=\emptyset $, and we then show that this implies $(M',g')$ is holed. \\ \indent 
Let $q'$ be a point in $\text{edge}(\phi(S))\cap \phi(S)$. In that case denote $q=\phi^{-1}(q')\in S$ and take a future directed timelike curve $\sigma$ from $q$ to $q_{\epsilon}\in I^+(S)\cap D_{\epsilon}(S)$, and set $\sigma$ to be past inextendible in $M$. Then there is a timelike curve $\sigma'=\phi(\sigma)\subset M'$ passing through $q'$. Take $U_i(q')$ to be a system of increasingly small neighborhoods $U_i(q')\supsetneq U_{i+1}(q')$, each containing points in $I^+(q')$ and $I^-(q')$, such that $\{U_i(q')\}$ has accumulation point $q'$. Define a collection of curves $\{\gamma_i'\}$ by taking $\sigma'$, removing from $\sigma'$ the portion $\sigma'\cap U_i(q')$, and replacing that portion with timelike segments with endpoints in $I^+(q')$ and $I^-(q')$ which miss $\phi(S)$. Although this might produce only piecewise smooth timelike curves, the curves $\{\gamma_i'\}$ can be approximated by $C^1$ causal curves (still missing $\phi(S)$), which we relabel as $\{\gamma_i'\}$. Now consider $\{\phi^{-1}(\gamma_i')\}$. This defines a collection of $C^1$ causal curves in $D_{\epsilon}(S)$ that approach $\sigma$ but which do not intersect $S$. \\ \indent 
We now recall a well known fact. In any spacetime $L$, there exists a sufficiently small neighborhood $N(x)\subset L$ around some point $x$ such that $J^+(y)\cap J^-(z)$ is compact for all $y$, $z$ $\in N(x)$. Since a sufficiently small $N(x)$ is causal, every point in a spacetime lives in a small globally hyperbolic neighborhood. \\ \indent 
Consider such a globally hyperbolic neighborhood $N(q)\subset M$ centered at $q$. Without loss of generality, we can take $N(q)$ to have Cauchy surface $S\cap N(q)$. For some sufficiently large $n\in \mathbb{N}$, the causal curves $\{\phi^{-1}(\gamma_{i\geq n}')\}$ lie in $N(q)$ and are inextendible therein. But since these do not intersect $S$, we contradict the global hyperbolicity of $N(q)$. \\ \indent 
It now follows that $\text{edge}(\phi(S))$, if not empty, lies outside of $\phi(S)$. By standard results in causal theory, $H(\phi(S))$ is ruled by null geodesics intersecting $\text{edge}(\phi(S))$. By the definition of $D_{\epsilon}(S)$, we can use $\phi$ to pull back $H(\phi(S))\cap \phi(D_{\epsilon}(S))$ into $D_{\epsilon}(S)\cap M$. By the definition of $D_{\epsilon}(S)$, it is then clear that $\phi^{-1}\left[ H(\phi(S))\cap \phi(D_{\epsilon}(S)\right] \cap D(S)\neq \emptyset$, and thus $(M',g')$ is holed. 
\end{proof}

\end{document}